\newcommand{\todoi}[1]{\todo[inline]{#1}}
\def\blfootnote{\xdef\@thefnmark{}\@footnotetext}
\def\BState{\State\hskip-\ALG@thistlm}
\newcommand{\removed}[1]{}
\newcommand{\ra}{\rightarrow}
\newcommand{\E}{\textup{E}}
\definecolor{DarkRed}{RGB}{182,11,1}
\newcommand{\ceil}[1]{\ensuremath{\left\lceil #1 \right\rceil}}
\newcommand{\floor}[1]{\ensuremath{\left\lfloor #1 \right\rfloor}}
\algnewcommand{\IfThenElse}[3]{
	\State \algorithmicif\ #1\ \algorithmicthen\ #2\ \algorithmicelse\ #3}
\title{Exact size counting in uniform population protocols in nearly logarithmic time} 
	\author
	{David Doty}
	{Department of Computer Science, University of California, Davis}
	{doty@ucdavis.edu}
	{}
	{\opt{full}{Supported by NSF grant CCF-1619343.}}
    \author
    {Mahsa Eftekhari}
    {Department of Computer Science, University of California, Davis}
    {mhseftekhari@ucdavis.edu}
    {}
    {\opt{full}{Supported by NSF grant CCF-1619343.}}
    \author
    {Othon Michail}
    {Department of Computer Science, University of Liverpool, UK}
    {Othon.Michail@liverpool.ac.uk}
    {}
    {\opt{full}{Supported by EEE/CS initiative NeST.}}
    \author
    {Paul G. Spirakis}
    {Department of Computer Science, University of Liverpool, UK and Computer Technology Institute \& Press ``Diophantus'' (CTI), Patras, Greece}
    {P.Spirakis@liverpool.ac.uk}
    {}
    {\opt{full}{Supported by EEE/CS initiative NeST.}}
    \author
    {Michail Theofilatos}
    {Department of Computer Science, University of Liverpool, UK}
    {michail.theofilatos@liverpool.ac.uk}
    {}
    {\opt{full}{Supported by EEE/CS initiative NeST, Leverhulme Research Centre for Functional Materials Design.}}
\authorrunning{D. Doty, M. Eftekhari, O. Michail, P.\,G. Spirakis, and M. Theofilatos}
    \subjclass{
        CCS 
        $\rightarrow$ 
        Theory of computation 
        $\rightarrow$ 
        Design and analysis of algorithms 
        $\rightarrow$ 
        Distributed algorithms
    }
\begin{document}

\maketitle

\def\Cconstant{6}
\FPeval{\LCconstant}{clip(\Cconstant*2)}
\FPeval{\LCsubprotocolStateCountConstant}{clip(\Cconstant+\LCconstant)}
\FPeval{\Mconstant}{clip(\Cconstant*3)}
\newcommand{\aveconstant}{\Mconstant}
\newcommand{\countconstant}{\Cconstant}

\FPeval{\stateCountExponent}{clip(\Cconstant+\LCconstant+\Mconstant+\aveconstant+\countconstant)}

\def\CconstantMin{3}
\FPeval{\LCconstantMin}{clip(\CconstantMin*2)}
\FPeval{\LCsubprotocolStateCountConstantMin}{clip(\CconstantMin+\LCconstantMin)}
\FPeval{\MconstantMin}{clip(\CconstantMin*3)}
\newcommand{\aveconstantMin}{\MconstantMin}
\newcommand{\countconstantMin}{\CconstantMin}

\FPeval{\stateCountExponentMin}{clip(\CconstantMin+\LCconstantMin+\MconstantMin+\aveconstantMin+\countconstantMin)}

\newcommand{\stateCountConstantLog}{\ensuremath{15}}
\newcommand{\stateCountConstant}{\ensuremath{2^{\stateCountConstantLog}}}
\newcommand{\lowerBoundOnM}{\ensuremath{3n^3}}

\def\exactCountingConvergenceTimeConstant{6}
\FPeval{\expectedTimeConstant}{clip(\exactCountingConvergenceTimeConstant+1)}

\def\exactCountingConvergenceTimeProbErrNumerator{10 + 5 \log \log n}
\def\exactCountingConvergenceTimeProbErr{\frac{\exactCountingConvergenceTimeProbErrNumerator}{n}}

\newcommand{\epidemicExpectedTimeConstant}{4}

\newcommand{\C}{\texttt{C}}
\newcommand{\LC}{\texttt{LC}}
\newcommand{\isLeader}{\texttt{isLeader}}
\newcommand{\M}{\texttt{M}}
\newcommand{\ave}{\texttt{ave}}
\newcommand{\thecount}{\texttt{count}}
\newcommand{\phase}{\texttt{phase}}
\newcommand{\maxPhase}{\texttt{MaxPhase}}

\FPeval{\maxPhaseValue}{1184}
\FPeval{\betaUValue}{clip(12*\maxPhaseValue)}
\newcommand{\aveTimeConstant}{37}

\newcommand{\exactCounting}{\textsc{ExactCounting}}
\newcommand{\uniqueID}{\textsc{UniqueID}}
\newcommand{\electLeader}{\textsc{ElectLeader}}
\newcommand{\averaging}{\textsc{Averaging}}
\newcommand{\timer}{\textsc{Timer}}
\newcommand{\extendCode}{\textsc{ExtendCode}}
\newcommand{\setNewLeaderCode}{\textsc{SetNewLeaderCode}}
\newcommand{\randbits}{\textsc{RandBits}}
\newcommand{\append}{\textsc{Append}}
\newcommand{\false}{\textsc{False}}
\newcommand{\true}{\textsc{True}}

\newcommand{\rec}{\ensuremath{\textrm{rec}}}
\newcommand{\sen}{\ensuremath{\textrm{sen}}}

\algnewcommand{\LeftComment}[1]{\Statex \(\triangleright\) #1}

\newcommand{\round}[1]{\ensuremath{\textsc{Round}_{#1}}}

\todoi{Here is the bit requirement for each variable in \exactCounting:

$|\C| = \Cconstant \log n$

$|\LC| = \LCconstant \log n$

$\isLeader$ = 1 bit

For integers $x$ below, $|x|$ means length of binary expansion of $x$.

$|\M| = \log 3 + \Mconstant \log n < 2 +  \Mconstant \log n$

$|\ave| = 2 + \aveconstant \log n$

$|\thecount| = \countconstant \log n$

$|\phase| = \log \maxPhaseValue < 12$
}

\todoi{DD: I worked on this on the place and made changes. Just in case there are other changes I renamed the old main.tex file to main-bak.tex as of about 10:30pm PST.}

\begin{abstract}
We study population protocols:
networks of anonymous agents whose pairwise interactions are chosen uniformly at random.
The \emph{size counting problem} is that of calculating the exact number $n$ of agents in the population,
assuming no leader (each agent starts in the same state).
We give the first protocol that solves this problem in sublinear time.

The protocol converges in
$O(\log n \log \log n)$ time
and uses
$O(n^{\stateCountExponent})$ states
($O(1) + \stateCountExponent \log n$ bits of memory per agent)
with probability $1-O(\frac{\log \log n}{n})$.
The time to converge is also $O(\log n \log \log n)$ in expectation.
Crucially, unlike most published protocols with $\omega(1)$ states,
our protocol is \emph{uniform}:
it uses the same transition algorithm for any population size,
so does not need an estimate of the population size to be embedded into the algorithm.
A sub-protocol is the first uniform sublinear-time leader election population protocol,
taking $O(\log n \log \log n)$ time and $O(n^{\LCsubprotocolStateCountConstant})$ states.
The state complexity of both the counting and leader election protocols can be reduced to
$O(n^{\stateCountExponentMin})$
and
$O(n^{\LCsubprotocolStateCountConstantMin})$
respectively,
while increasing the time to $O(\log^2 n)$.
\opt{sub}{\\}

\opt{sub}{
    \noindent
    \textbf{Best student paper award eligibility:}
    Mahsa Eftekhari is a full-time student and made a significant contribution to the paper.

    \noindent
    \textbf{Full version of this paper:}
    \url{http://web.cs.ucdavis.edu/~doty/papers/escuppnlt.pdf}\\
    A time-stamped version is on arXiv,
    whose link (still being processed by arXiv at the time of DISC submission)
    can be found at \url{http://web.cs.ucdavis.edu/~doty/papers\#tech-reports}
}

\end{abstract}

\section{Introduction}
\label{sec:intro}

\emph{Population protocols}~\cite{AADFP06}
are networks that consist of computational entities called \emph{agents}
with no control over the schedule of interactions with other agents.
In a population of $n$ agents,
repeatedly a random pair of agents is chosen to interact,
each observing the state of the other agent before updating its own state.\opt{full}{\footnote{Using message-passing terminology, each agent sends its entire state of memory as the message.}}
They are an appropriate model for electronic computing scenarios such as sensor networks
and for ``fast-mixing'' physical systems such as
animal populations~\cite{Volterra26},
gene regulatory networks~\cite{bower2004computational},
and chemical reactions~\cite{SolCooWinBru08},
the latter increasingly regarded as an implementable ``programming language'' for molecular engineering,
due to recent experimental breakthroughs in DNA nanotechnology~\cite{chen2013programmable, srinivas2017enzyme}.

The \emph{(parallel) time} for some event to happen in a protocol is a random variable,
defined as the number of interactions, divided by $n$,
until the event happens.
This measure of time is based on the natural parallel model where
each agent participates in $\Theta(1)$ interactions each unit of time;
hence $\Theta(n)$ total interactions per unit time~\cite{AAE08}.
In this paper all references to ``time'' refer to parallel time.
The original model~\cite{AADFP06} stipulated that each agent is a finite-state machine,
with a state set $Q$ constant with respect to the population size,
and a transition function $\delta: Q \times Q \to Q \times Q$
indicating that if agents in states $a \in Q$ and $b \in Q$ interact,
and $\delta(a,b) = (c,d)$,
then they update respectively to states $c$ and $d$.
Many important distributed computing problems
provably require $\Omega(n)$ time for population protocols
to solve under the constraint of $O(1)$ states,
such as leader election~\cite{LeaderElectionDIST},
exact majority computation~\cite{alistarh2017time},
and many other predicates and functions~\cite{hcapflpp}.

This limitation on time-efficient computation with constant-memory protocols
motivates the study of population protocols with memory that can grow with $n$.
A recent blitz of impressive results has shown that
leader election~\cite{alistarh2017time, GS18, berenbrink2018simple, bilke2017brief}
and exact majority~\cite{AG15, alistarh2018space}
can be solved in $\mathrm{polylog}(n)$ time
using $\mathrm{polylog}(n)$ states.
Notably,
each of these protocols requires an approximate estimate of the population size $n$ to be encoded into each agent
(commonly a constant-factor upper bound on $\lfloor \log n \rfloor$ or $\lfloor \log \log n \rfloor$).

This \emph{partially} motivates our study of
the \emph{size counting problem}
of computing the population size $n$.
The problem is clearly solvable by an $O(n)$ time protocol
using a straightforward leader election:
Agents initially assume they are leaders and the count is 1.
When two leaders meet,
one agent sums their counts
while the other becomes a follower,
and followers propagate by epidemic the maximum count.
No faster protocol was previously known.

Our study is further motivated by the desire to understand the power of \emph{uniform} computation in population protocols.
All of the mentioned positive results
with $\omega(1)$ states~\cite{AG15, alistarh2017time, bilke2017brief, GS18, berenbrink2018simple, alistarh2018space, MocquardAABS2015, MocquardAS2016}
use a \emph{nonuniform} model:
given $n$,
the state set $Q_n$ and transition function $\delta_n: Q_n \times Q_n \to Q_n \times Q_n$
are allowed to depend arbitrarily on $n$,
other than the constraint that
$|Q_n| \leq f(n)$ for some ``small'' function $f$ growing as $\mathrm{polylog}(n)$.\footnote{
    Another constraint,
    sometimes explicitly stated, e.g., \cite{alistarh2017time},
    but usually implicit from the constructions,
    requires that if $|Q_n| = |Q_m|$ for $n \neq m$,
    then $Q_n = Q_m$ and $\delta_n = \delta_m$.
}
This nonuniformity is used to encode a value such as $\floor{\log n}$ into the cited protocols.\footnote{
    In these papers~\cite{AG15, alistarh2017time, bilke2017brief, GS18, berenbrink2018simple, alistarh2018space, MocquardAABS2015, MocquardAS2016},
    the role of the value $\log n$ (or $\log \log n$) is as a threshold to compare to some other integer $k$,
    which starts at 0 and increments,
    stopping some stage of the protocol when $k \geq \log n$.
    A na\"{i}ve attempt to achieve uniformity is to initialize the comparison threshold to some constant $c < \log n$,
    which is then updated by the agent with each interaction in such a way that $c$ ``quickly'' reaches some value $\geq \log n$.
    The challenge, however, is that prior to the event $c \geq \log n$,
    the comparison ``$k \geq c$?'' should never evaluate to true and cause an erroneous early termination of the stage,
    nor should a fast-growing $c$ ``overshoot'' $\log n$ and excessively increase the memory requirement.
}

We define a stricter \emph{uniform} variant of the model:
the same transition algorithm is used for all populations,
though the number of states may vary with the population size
(formalized with Turing machines; see Section~\ref{sec:model}.)
A uniform protocol can be deployed into \emph{any} population without knowing in advance the size,
or even a rough estimate of the size.
The original, $O(1)$-state model~\cite{AADFP06, AAE06, AAE08},
is uniform since there is a single transition function.
Because we allow memory to grow with $n$,
our model's power exceeds that of the original,
but is strictly less than that of the nonuniform model of most papers using $\omega(1)$ states.


\subsection{Contribution}
Our main result is a uniform protocol that,
with probability $\geq 1-\exactCountingConvergenceTimeProbErr$,
counts the population size,
converging in $\exactCountingConvergenceTimeConstant \log n \log \log n$ time
using $\stateCountConstant n^{\stateCountExponent}$ states
($\stateCountConstantLog + \stateCountExponent \log n$ bits),
without an initial leader
(all agents have an initially identical state).
The protocol is \emph{stabilizing}:
the output is correct with probability 1,
converging in expected time $\expectedTimeConstant \log n \log \log n$.\footnote{The time to reach a stable configuration, from which the output \emph{cannot} change, is $\Omega(n)$ for our protocol. We leave open the question of a protocol that reaches a stable configuration in sublinear time.}

A key subprotocol performs leader election
in time $O(\log n \log \log n)$ with high probability and in expectation.
It uses $O(n^{\LCsubprotocolStateCountConstant})$ states,
much more than the $O(\log \log n)$ known to be necessary~\cite{alistarh2017time} and sufficient~\cite{GS18}
for sublinear time leader election.
However,
it is uniform,
unlike all known sublinear-time leader election
protocols~\cite{alistarh2017time, GS18, berenbrink2018simple, bilke2017brief}.
It repeatedly increases the length of a binary string each agent stores,
where the protocol,
with probability at least $1 - O(1/n)$,
takes $O(\log n)$ time once the length of this string reaches $\approx \log n$.
The length increases in stages that each take $O(\log n)$ time.
Our main protocol doubles the string length each stage,
so takes $\log \log n$ stages
(hence $O(\log n \log \log n)$ time)
to reach length $\log n$.

The protocol generalizes straightforwardly to trade off time and memory:
by adjusting the rate at which the string length grows,
the convergence time $t(n)$ is $O(f(n) \log n)$,
where $f(n)$ is the number of stages required for the string length to reach $\log n$.
For example,
if the code length increments by 1 each stage,
then $f(n) = \log n$,
so $t(n) = \log^2 n$.
In this case the state complexity would be $O(n^{\stateCountExponentMin})$
for the full protocol
and $O(n^{\LCsubprotocolStateCountConstantMin})$
for just the leader election portion.
\opt{full}{(See Section~\ref{subsec:minimize-states}.)}
\opt{sub}{(See the full version of the paper for details.)}
By squaring the string length each stage,
$t(n) = \log n \log \log \log n$.
By exponentiating the string length,
$t(n) = \log n \log^* n$.
Even slower-growing $f(n)$ such as inverse Ackermann are achievable.
However,
the faster the string length grows each stage,
the more it potentially overshoots $\log n$,
increasing the space requirements.
For example,
for $t(n) = \log n \log \log \log n$ by repeated squaring,
the worst-case string length is $\log^2 n$,
meaning $2^{O(\log^2 n)} = n^{O(\log n)}$ states.
Multiplying length by a constant gives the fastest increase that maintains a polynomial number of states.

The number of states our protocol uses is very large compared to most population protocol results,
which typically have $\mathrm{polylog}(n)$ states.
However, it is worth noting that a different goalpost is germane for the size counting problem:
at least $n$ states are required,
since it takes $\log n$ bits merely to write the number $n$.
Our protocol uses a constant factor more bits: about $\stateCountExponent \log n$.
Chemical reaction networks are frequently cited as a real system for which population protocols are an appropriate model.
It is reasonable to object that
since each state corresponds to a different chemical species,
such a large number of states is unrealistic.
However, biochemistry provides numerous examples of heteropolymers,
such as nucleic acids and peptide chains
(linear polymers of amino acids that fold into proteins),
in which $c = O(1)$ basic monomer types (e.g., the 4 DNA bases A, C, G, T)
suffice to construct $c^k$ different polymer types consisting of $k$ monomers.
On the engineering side,
DNA strand displacement systems~\cite{SolSeeWin10}
can in principle construct and modify such information-rich ``combinatorial'' polymers
in a controllable algorithmic fashion,
for example simulating a Turing machine whose length-$k$ tape
is represented by $O(k)$ DNA strands~\cite{polymerStackDNA16}
or
searching for solutions to a quantified Boolean formula~\cite{thachuk2012space}.
The synthesis cost for such systems would scale with the number of \emph{bits} of memory
($O(1)$ molecules per bit stored),
thus only logarithmically with the total number of states.
It is thus reasonable to conjecture that reliable algorithmic molecular systems,
with moderately sized memories in each molecule,
are on the horizon.


\subsection{Related Work}
\label{subsec:related}

For the exact population size counting problem, the most heavily studied case is that of \emph{self-stabilization},
which makes the strong adversarial assumption that arbitrary corruption of memory is possible in any agent at any time,
and promises only that eventually it will stop.
Thus, the protocol must be designed to work from any possible configuration of the memory of each agent.
It can be shown that counting is \emph{impossible} without having one agent (the ``base station'')
that is protected from corruption~\cite{beauquier2007self}.
In this scenario
$\Theta(n \log n)$ time is sufficient~\cite{beauquier2015space} and necessary~\cite{AspnesBBS2016} for self-stabilizing counting.
Counting has also been studied in the related context of worst-case dynamic networks~\cite{IK14, KLO10, MCS13, LBBC14, CFQS12}.

\todo{DD: the paper \cite{CMNPS11} talks about population size, but I don't understand the exact problem studied and how it relates to ours. Othon and Paul, do you want to add a discussion of it?}
In the less restrictive setting in which all nodes start from a pre-determined state,
Michail \cite{M15} proposed a terminating protocol in which a pre-elected leader equipped with two $n$-counters computes an approximate count between $n/2$ and $n$ in $O(n\log{n})$ parallel time with high probability.
Regarding \emph{approximation} rather than exact counting,
Alistarh, Aspnes, Eisenstat, Gelashvili, Rivest~\cite{alistarh2017time} have shown a
uniform protocol that in $O(\log n)$ expected time converges to an approximation $n'$ of the true size $n$
such that with high probability $1/2 \log n \leq \log n' \leq 9 \log n$,
i.e., $\sqrt{n} \leq n' \leq n^9$.\footnote{
    We also require an approximate estimate of $n$ in the subprotocol that computes $n$ exactly, but it is not straightforward to adapt the technique of~\cite{alistarh2017time} to our setting. The state complexity would be higher, since our method of estimating $n$ obtains $n'$ such that $n \leq n' \leq n^6$. By squaring $n'$ obtained from~\cite{alistarh2017time} to ensure it is at least $n$, the result could be as large as $n^{18}$.
}

Key to our technique is a protocol, due to Mocquard, Anceaume, Aspnes, Busnel, and Sericola~\cite{MocquardAABS2015}.
Despite the title of that paper (``Counting with Population Protocols''),
it actually solves a different problem,
a generalization of the majority problem:
count the exact difference between ``blue'' and ``red'' agents in the initial population.
The protocol assumes an initial leader and
that each agent initially stores $n$ exactly.
In a follow-up work~\cite{MocquardAS2016},
Mocquard et al.~showed a \emph{uniform} protocol that,
for any $\epsilon > 0$,
computes an approximation of the relative proportion
(but not exact number)
of ``blue'' nodes in the population,
within multiplicative factor $(1 + \epsilon)$ of the true proportion.
The approximation precision $\epsilon$ depends on a constant number $m$,
which is encoded in the initial state.
They also describe a protocol to find the number of ``blue'' nodes in the population,
However, like~\cite{MocquardAABS2015},
this latter protocol is not uniform since the transition function encodes the exact value of $n$.

In a different network model,
Jelasity and Montresor~\cite{JM2004}
use a similar technique to ours that involves a fast ``averaging'' similar to~\cite{MocquardAABS2015,MocquardAS2016}.
However, they do arbitrary-precision rational number averaging,
so have a larger memory requirement (not analyzed).
They also assume each agent initially has a unique IDs.
Goldwasser, Ostrovsky, Scafuro, Sealfon~\cite{GOSS18} study a related problem in
a synchronous variant of population protocols:
assuming that both an adversary and the agents themselves have the ability to create and destroy agents
(similar to the more general model of chemical reaction networks),
using $\mathrm{polylog}(n)$ states,
they \emph{maintain} the population size within a multiplicative constant of a target size.
This is likely relevant to the exact and approximate size counting problems,
since the protocol of~\cite{GOSS18} must ``sense'' when the population size is too large or small and react.

\section{The model}
\label{sec:model}

\newcommand{\blank}{\llcorner\negthinspace\lrcorner}

The system consists of a population of $n$ distributed and anonymous
(no unique IDs)
\emph{agents}, also called \emph{nodes} or \emph{processes}, that can perform local computations.
Each agent is a multitape ($r$-tape) Turing Machine which is defined by a $6$-tuple $M = \langle Q, \Gamma, q_0, \epsilon, F, \delta \rangle$.
$Q$ is a finite set of \emph{TM states},
\todo{DD: I think we should allow blank in the tape alphabet. Or somehow indicate that the ability to tell where the end of the tape is has been ``built into the TM hardware''. Otherwise we have to talk about encodings of delimiters and it would get really ugly.}
$\Gamma$ is the binary \emph{tape alphabet} $\{0, 1\}$,
$q_0 \in Q$ is the \emph{initial TM state},
$F \subseteq Q$  is the set of \emph{final TM states},
and $\delta: Q \times \Gamma^r \rightarrow Q \times (\Gamma \times \{L,R,S\})^r$ is the \emph{TM transition function}, where $L$ is left shift, $R$ is right shift and $S$ is no shift.
We assume that $r$ is a fixed constant, i.e. independent of the population size.

We define three types of tapes. \textit{Input}, \textit{Output} and \textit{Work} tapes. The Input and Output tapes provide information from and to the other agent during an interaction. The Work tapes are used for storing data and for internal operations, which can be assumed to be additions, subtractions and multiplications (divisions can be performed via the Euclidean Division Algorithm, which divides two integers using additions and subtractions).

Let $r_i < r$ be the number of input tapes, $r_o < r$ the number of output tapes and $r_w < r$ the number of work tapes, where $r_i + r_o + r_w = r$.
For any $t \geq 0$ let $I(t), O(t)$ and $W(t)$ be $|V| \times r_i$, $|V| \times r_o$ and $|V| \times r_w$ matrices respectively, such that $I_{v,j}(t)$, $O_{v,j}(t)$ and $W_{v,j}(t)$ are the values of the $j$-th Input, Output and Work tapes respectively of the agent $v \in V$ at time $t$. Furthermore, for every $t \geq 0$, let $q(t)$ be a $|V|$-dimensional vector such that $q_{v}(t)$ is the \emph{state} (or \emph{agent-TM-configuration}) of $v \in V$ at time $t$.
We refer to $q(t)$ as the \emph{configuration} (or \emph{global-configuration}) at time $t$.
We say that a population protocol is \emph{leaderless} if $q_{v}(0) = q_0 \; \forall \; v \in V$, i.e.
all agents have the same state in the initial configuration.
\todo{MT: Should we also have to say that not only they have the same initial state, but also the tapes have the same values?
DD: state has now been defined to mean ``TM configuration'', not ``TM state'', so ``same state'' implies ``same tape values''.}
We also say that $I(0)$ is the population input at time 0.

Let $S$ be the finite set of binary strings $\{0,1\}^*$.
This model is defined on a population $V$ of agents and consists of an
\emph{input initialization function}
$\iota : S \rightarrow S^r \times Q$ and an
\emph{output function}
$\gamma : S^r \times Q \rightarrow D$
($D$ is the set of output values).
Initially, the values of the tapes of each agent are determined by the input initialization function $\iota$, and in every step $t+1 \geq 1$, a pair of agents interacts.
During an interaction $(a, b)$ between two agents at time $t+1$, each agent updates its state and copies the contents of its Output tapes to the Input tapes of the other agent $(O_{a,:}(t) \rightarrow I_{b,:}(t+1)$ and $O_{b,:}(t) \rightarrow I_{a,:}(t+1))$.
In addition, they update their states according to the (global) joint transition function $f: Q \times Q \ra Q \times Q$ as in standard population protocols.

We furthermore assume that each agent has access to independent uniformly random bits,
assumed to be pre-written on a special read-only tape
(so that we can use a deterministic TM transition function).
This is different from the traditional definition of population protocols,
which assumes a deterministic transition function.
Several papers~\cite{alistarh2017time, berenbrink2018simple}
indicate how to use the randomness built into the interaction scheduler to provide nearly uniform random bits to the agents,
using various \emph{synthetic coin} techniques,
showing that the deterministic model can effectively simulate the randomized model.
In the interest of brevity and simplicity of presentation,
we will simply assume in the model that each agent has access to a source of uniformly random bits.

\todoi{DD: We had previously said something like this: One way to remove this assumption is to use the fact that each agent in an interaction has probability exactly $\frac{1}{2}$ to be the sender. The bits from this tape can be used as random bits from the algorithm, other than the current, when needed, with only a slight delay due to the fact that an agent may require more bits than the total number of interactions it has experienced so far.

But this actually doesn't work. If agents store 1 as receiver and 0 as sender, then knowing the bits gives some information about the current state.
For example, if an agent is all 1's,
then since only receivers update the state in a one-way protocol like ours,
this agent probably is at a deeper level than others,
having gone through more state-changing interactions than other agents.}

\noindent \textbf{Memory requirements for \exactCounting\ protocol}.
In our main protocol, \exactCounting, the agents need $r_i = r_o = 3$ Input and Output tapes for storing the variables $\C$, $\LC$ and $\ave$. The memory requirements (number of bits) are
$|\C| = \Cconstant \log{n}$,
$|\LC| = \LCconstant \log{n}$ and
$|\ave| = \aveconstant \log{n}$.
In addition, three Work tapes are needed in order to store the variables $\M$, $\thecount$  and the constants $\isLeader$ and $\phase$ (the first cell of a tape can be used for storing the boolean variable $\isLeader$, while $\phase$ can be stored after that cell). The memory requirements (number of bits) are $|\M| = 1 + \Mconstant \log{n}$, $|\thecount| = \countconstant \log{n}$ and $|\isLeader| + |\phase| = 1 + \log{\maxPhaseValue}$. Finally, two more Work tapes are needed in order to perform divisions between integers, using the Euclidean Division Algorithm. 

\noindent
\textbf{Terminology conventions.}
Throughout this paper, $n$ denotes the number of agents in the population.
The \emph{time} until some event is measured as the number of interactions until the event occurs, divided by $n$,
also known as parallel time.
This represents a natural model of time complexity in which we expect each agent to have $O(1)$ interactions per unit of time, hence across the whole population, $\Theta(n)$ total interactions occur per unit time.
All references to ``time'' in this paper refer to parallel time.
$\log n$ is the base-2 logarithm of $n$,
and $\ln n$ is the base-$e$ logarithm of $n$.

For ease of understanding,
we will use standard population protocol terminology and not refer explicitly to details of the Turing machine definition except where needed.
Therefore a \emph{state} always refers to the TM initial configuration of an agent
(leaving out TM state and tape head positions since these are identical in all initial configurations),
a \emph{configuration} $\vec{c}$ refers to the length-$n$ vector giving the state of each agent,
and \emph{transition function} refers to the function computing the next state of an agent,
taking as input its state and the other agent's state,
by running its Turing machine until it halts.
An \emph{epidemic}~\cite{AAE08} is a subprotocol of the form
$\delta(i,u) = (i,i)$ starting with one agent with $i$ (``infected'')
and all other $n-1$ agents with $u$ (``uninfected''),
which in $O(\log n)$ expected time converts all agents to $i$.

\todoi{make sure to mention that we subsequently use standard population protocol terminology: state means state of agent (initial configuration of Turing machine), configuration means length-$n$ vector describing state of each agent, agents have ``fields'', which is shorthand for worktape where strings/integers/other discrete data is stored}


\subsection{Stabilization and convergence}

A protocol \emph{converges} when it reaches a configuration where all agents have the same output,
which does not subsequently change.
In our main protocol,
agents have a field \thecount,
and convergence to the correct output
occurs when each agent has written the value $n$ into \thecount\ for the last time.
Configuration $\vec{c}$ is \emph{stable} if every agent agrees on the output,
and no configuration reachable from $\vec{c}$ has a different output in any agent.
A protocol \emph{stabilizes} if,
with probability 1,
it eventually reaches a correct stable configuration.
Using this terminology,
a protocol \emph{stably solves the exact size counting problem}
if,
for all $n \in \mathbb{Z}^+$,
with probability 1,
on a population of $n$ agents,
the protocol converges to output $n$ and enters a stable configuration.

If the number of configurations reachable from the initial configuration is finite,
then stabilization is equivalent to requiring that,
for every configuration $\vec{c}$ reachable from the initial configuration,
a correct stable configuration is reachable from $\vec{c}$.
It is also equivalent to saying that every fair execution reaches a correct stable configuration,
where an execution is \emph{fair} if every configuration that is infinitely often reachable in it is infinitely often reached.
Although our protocol as defined has an infinite number of reachable configurations,
this is done solely to make the analysis simpler,
and it can easily be modified to be finite
(see explanation of the \uniqueID\ protocol in Section~\ref{subsec:uniqueID}).

\newcommand{\polylog}{\mathrm{polylog}}
\newcommand{\N}{\mathbb{N}}

\section{Exact Population Size Counting} \label{sec:protocol}

This section is devoted to proving the main theorem of our paper:

\begin{theorem}\label{thm:main}
    There is a leaderless, uniform population protocol
    that stably solves the exact size counting problem.
    With probability at least $1 - \exactCountingConvergenceTimeProbErr$,
    the convergence time is at most
    $\exactCountingConvergenceTimeConstant \ln n \log \log n$,
    and each agent is uses
    $\stateCountConstantLog + \stateCountExponent \log n$
    bits of memory.
    The expected time to convergence is at most
    $\expectedTimeConstant \ln n \log \log n$.
\end{theorem}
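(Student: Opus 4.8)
The plan is to realize \exactCounting\ as the composition of three subprotocols --- a uniform leader election routine \electLeader\ (itself built on a \uniqueID\ routine that assigns distinct random codes), the discrete averaging routine \averaging\ adapted from~\cite{MocquardAABS2015}, and a synchronizing \timer\ subprotocol --- and to obtain the theorem by analyzing each piece separately and then combining the guarantees by a union bound. Each agent carries the fields $\LC$ (its random code), $\isLeader$, $\M$, $\ave$, $\thecount$, and $\phase$; the \phase\ field records which stage of the composed protocol an agent believes it is in, and $\thecount$ holds the agent's current guess of $n$. By the definitions in the model section, the protocol \emph{converges} precisely when every agent writes $n$ into $\thecount$ for the last time, so the goal is to bound the time until that event and to show the eventual output is correct with probability $1$.

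\textbf{Phase 1: electing a leader and estimating $n$.} Each agent draws a random binary string $\LC$ from its private random-bit tape. When two agents interact, the one holding the lexicographically smaller string sets $\isLeader \leftarrow \false$, and the larger string is propagated by epidemic; when two agents holding equal maximal strings meet, the current stage is declared over, and in the next stage every agent appends fresh random bits, \emph{doubling} the common code length. A \timer\ running alongside ensures, with probability $1-O(1/n)$ per stage, that each stage lasts long enough --- $\Theta(\log \ell) = O(\log n)$ parallel time while the code length is $\ell$ --- for that stage's maximal code to saturate the population by epidemic before the length is doubled again (this prevents a ``stale'' second leader from surviving). Since a uniformly random string of length $c\log n$ is distinct across all $n$ agents with probability $1-O(1/n)$ for a suitable constant $c$, after $O(\log\log n)$ stages (total time $O(\log n\log\log n)$) the code length reaches $\Theta(\log n)$ and exactly one agent retains $\isLeader = \true$. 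This phase is \emph{uniform}: no stage length and no code length is hard-coded in terms of $n$ --- the staging is self-clocked. The final code length $\ell^\ast$ also yields a size estimate $n' = 2^{\ell^\ast}$ with $n \le n' \le n^6$ with high probability, which every agent stores.

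\textbf{Phase 2: exact counting by averaging.} Using $n'$, the leader sets $\M$ to a value polynomial in $n'$ (hence in $n$) with $\M \ge \lowerBoundOnM$, sets $\ave \leftarrow \M$, and every follower sets $\ave \leftarrow 0$. Agents then run the discrete averaging dynamics of~\cite{MocquardAABS2015}: an interacting pair replaces $(\ave_a, \ave_b)$ by $(\lceil (\ave_a + \ave_b)/2 \rceil, \lfloor (\ave_a + \ave_b)/2 \rfloor)$, so $\sum_v \ave_v = \M$ is invariant and, by the analysis of~\cite{MocquardAABS2015}, within $O(\log n)$ parallel time (with probability $1-O(1/n)$) every $\ave_v$ differs from $\M/n$ by less than $1$. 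Because $\M \ge \lowerBoundOnM$, a short computation shows this precision forces $\lfloor \M/\ave_v \rceil = n$ for every agent, so each agent computes $\thecount \leftarrow \lfloor \M/\ave_v \rceil$ by integer division on a work tape and thereby converges to the correct value $n$. A second \timer, scaled by $n'$, keeps Phase 2 from starting before Phase 1 has converged, contributing only $O(\log n)$ additional time; hence the overall convergence time is $O(\log n \log\log n)$.

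\textbf{Combining, stabilization, and the main obstacle.} The bad events --- the random codes failing to separate, a \timer\ firing too early, the size estimate underestimating $n$, or the averaging failing to converge in time --- each occur with probability $O(1/n)$, and there are $O(\log\log n)$ stages, so a union bound bounds the total failure probability by at most $\exactCountingConvergenceTimeProbErr$, as claimed. For the probability-$1$ (stabilizing) guarantee and the expected-time bound, I would run a slow brute-force mechanism underneath: if an agent ever detects an inconsistent configuration (e.g.\ two leaders long after Phase 1 should have ended, or inconsistent $\phase$ values), agents reset and perform the naive $O(n)$-time leaderless count --- leaders merging their counts, followers propagating the maximum by epidemic --- which reaches a correct stable configuration with probability $1$ from any reachable configuration. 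Since this happens with probability only $O(\log\log n / n)$ and costs $O(n)$ time, its contribution to the expected time is $O(\log\log n)$, which is absorbed into $O(\log n \log\log n)$. The part requiring the most care is Phase 1: making leader election truly uniform, i.e.\ guaranteeing simultaneously that the self-timed staging never closes a stage before that stage's maximal code has finished spreading by epidemic, and that the growing code length never overshoots $\Theta(\log n)$ by more than a constant factor (otherwise the state complexity blows up past polynomial) --- all without any a priori knowledge of $n$.
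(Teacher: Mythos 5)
Your proposal assembles the right ingredients (doubling random codes for uniform leader election and a size estimate, Mocquard-style averaging with $\M \geq \lowerBoundOnM$, a phase-clock timer, and a union bound), but it rests on a sequential composition that cannot be realized uniformly, and this is exactly the obstacle the paper's proof is built to avoid. You have Phase~2 start only after Phase~1 ``has converged,'' gated by ``a second \timer, scaled by $n'$,'' costing only $O(\log n)$ extra time. However, the code-generation process (\uniqueID) does not converge in sublinear time: a few duplicate codes can persist for $\Omega(n)$ time before their holders meet and trigger another level, and no agent can detect that the current level is final, nor that the level has reached $\Theta(\log n)$, since that threshold depends on the unknown $n$ (a timer calibrated by $n'$ is circular, as $n'$ is itself a product of the still-running Phase~1). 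Consequently a late level increase changes $\M$ and forces \averaging\ to restart \emph{after} the correct count has been written, and your scheme has no mechanism preventing the transient, incorrect values of the restarted \averaging\ from overwriting \thecount. The paper's resolution is different: all subprotocols run concurrently, \averaging\ and \timer\ are restarted at every level change, and the key point is the phase clock's time \emph{lower} bound --- with high probability \timer\ cannot reach \maxPhase\ before the restarted \averaging\ has re-converged, so the correct output written at the first ``long'' level at or above $\log n$ is never disturbed. The convergence analysis then conditions on which of levels $k$, $2k$, $4k$ (with $\log n \leq k < 2\log n$) is the first where more than $\Theta(\log n)$ time is spent; nothing in your write-up plays this role.

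A secondary gap is your stabilization and expected-time argument. The ``detect an inconsistency and fall back to the naive $O(n)$ count'' mechanism is not uniform as stated (detecting ``two leaders long after Phase~1 should have ended'' presupposes knowing how long Phase~1 should last, i.e.\ knowing $n$), and resetting the population consistently is itself a nontrivial subprotocol you have not specified. The paper needs no such fallback: with probability~1 all codes eventually become unique (so \uniqueID\ stabilizes, in expected $O(n)$ time), \averaging\ stabilizes by a potential-function argument, and \timer\ fires with probability~1, so the final restart writes the correct count and it is never changed; the expected bound $\expectedTimeConstant \ln n \log\log n$ then follows by combining the high-probability $O(\log n \log\log n)$ bound with the $O(n)$ bound on the low-probability slow branch. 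To repair your proof you would need to abandon the strict phase separation and either adopt the restart-plus-timer-lower-bound argument or supply some other uniform mechanism that protects \thecount\ across the unbounded tail of \uniqueID's execution.
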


The stabilization time can be much larger, up to $O(n)$.
(See Section~\ref{subsec:expected-time}.)
Theorem~\ref{thm:main}
follows from Theorems~\ref{thm:main:whp} and~\ref{thm:main:exp},
which respectively cover the ``with high probability'' and ``stabilization and expected time''
parts of Theorem~\ref{thm:main}.

\begin{algorithm}[ht]
	\floatname{algorithm}{Protocol}
	\caption{$\exactCounting(\rec, \sen)$}
	\label{protocol:counting}
	\begin{algorithmic}[100]		
        \State{}
		\LeftComment {state: strings $\C$ (code), $\LC$ (leader code), Bool $\isLeader$, ints $\M$, $\ave$, $\thecount$, $\phase$}
		\LeftComment {initial state of agent: $\C$ = $\LC$ = $\varepsilon$, $\isLeader = \true$, $\M = \ave = \thecount = \phase = 1$}
    	\State{$\uniqueID(\rec, \sen)$}
		\State{$\electLeader(\rec, \sen)$}
        \If {$\rec.\LC = \sen.\LC$} \Comment{separate restarts under different leaders}
    		\State{$\averaging(\rec, \sen)$}
            \State{$\timer(\rec, \sen)$}
        \EndIf
	\end{algorithmic}
\end{algorithm}

The protocol is \exactCounting.
There are four main subprotocols:
\uniqueID,
\electLeader,
\averaging,
and
\timer,
each discussed in detail in later subsections.
\exactCounting\ runs in parallel on all agents,
but within an agent,
each subprotocol runs sequentially
(for correctness each subprotocol must run in the given order).
Most state updates use one-way rules for selected agents sen (\emph{sender}) and rec (\emph{receiver}).
The only rule that is not one-way is \averaging, in which both sender and receiver update their state.
In all other cases, only the receiver potentially updates the state.

\noindent
\textbf{High-level overview of \exactCounting\ protocol.}
\uniqueID\ eventually assigns to every agent a unique id,
represented as a binary string called a \emph{code} $\C$.
\uniqueID\ requires $\Omega(n)$ time to converge,
but it does not need to converge before it can be used by the other subprotocols.
In fact, in other subprotocols, agents do not use each others' codes directly.
Agents also have a longer code called a \emph{leader code} $\LC$,
such that $2|\C| = |\LC|$ and,
for any candidate leader,
$\C$ is a prefix of $\LC$.
\electLeader\ elects a leader by selecting the agent whose leader code is lexicographically largest.
The code length $|\C|$ will eventually be at least length $\log n$,
so $|\C|$ can be used to
estimate an upper bound $\M$ on the value $\lowerBoundOnM$ to within a polynomial factor.
\averaging\ uses $\M$ in a leader-driven protocol that counts the population size exactly,
which is correct so long as $\M \geq \lowerBoundOnM$, by using a fast averaging protocol similar to the one studied by Mocquard et al.~\cite{MocquardAABS2015}.
\averaging\ must be restarted by the upstream \uniqueID\ subprotocol many times,
and in fact will be restarted beyond the $O(\log n \log \log n)$ time bound we seek.
However, within $O(\log n \log \log n)$ time, \averaging\ will converge to the correct population size.
Subsequent restarts of \averaging\ will re-converge to the correct output,
but prior to convergence will have an incorrect output.
\timer\ is used to detect when \averaging\ has likely converged,
waiting to write output into the \thecount\ field of the agent.
This ensures that after the correct value is written,
on subsequent restarts of \averaging,
the incorrect values that exist before \averaging\ re-converges
will not overwrite the correct value recorded during the earlier restart.

\subsection{\uniqueID}
\label{subsec:uniqueID}

We assume that two subroutines are available:
For $x,y \in \{0,1\}^*$,
$\append(x,y)$ returns $xy$,
and for $m \in \mathbb{Z^+}$,
$\randbits(m)$ returns a random string in $\{0,1\}^m$.

\begin{algorithm}[ht]
	\floatname{algorithm}{Subprotocol}
	\caption{$\uniqueID(\rec, \sen)$}
	\begin{algorithmic}[100]
        \If {$|\rec.\C| < |\sen.\C|$} \Comment{If receiver's code shorter than sender's, make same length.}
		    \State {$\extendCode(\rec, |\sen.\C| - |\rec.\C|))$}
		\EndIf

        \If {$\rec.\C = \sen.\C$}  \Comment{If codes are the same, double the length.}
		    \State {$\extendCode(\rec, \max(1, |\rec.C|))$}
		\EndIf
	\end{algorithmic}
\end{algorithm}

\begin{algorithm}[ht]
	\floatname{algorithm}{Subroutine}
	\caption{$\extendCode(\rec, \textrm{numBits})$}
	\begin{algorithmic}[100]
        \If {$\rec.\isLeader$} \Comment{extend LC by twice numBits; take new C bits from LC}
    		\State {$\textrm{newLC} \gets \append(\rec.\LC, \randbits(2 \cdot \textrm{numBits}))$}
            \State {$\setNewLeaderCode(\rec, \textrm{newLC})$} \Comment{described in Subsection~\ref{subsec:leaderElection}}
    		\State {$\rec.\C \gets \append(\rec.\C, \textrm{newLC}[(|\rec.\C|+1)\ ..\ (|\rec.\C|+\textrm{numBits})])$}
        \Else
            \State {$\rec.\C \gets \append(\rec.\C, \randbits(\textrm{numBits}))$}
		\EndIf
	\end{algorithmic}
\end{algorithm}

\begin{algorithm}[ht]	
	\floatname{algorithm}{Subroutine}
	\caption{$\setNewLeaderCode(\rec, \textrm{newLC})$}
	\label{protocol:election}
	\begin{algorithmic}[100]
        \State {$\rec.\LC \gets \textrm{newLC}$}
        \LeftComment{restart Timer and Average protocols whenever $\LC$ changes.}
		\State {$\rec.\phase \gets 1$}
        \State {$\rec.\M \gets 3 \cdot 2^{3 |\rec.\C|}$ }
		\If {$\rec.\isLeader$}
		    \State {$\rec.\ave \gets \rec.\M$}
        \Else
            \State {$\rec.\ave \gets 0$}
		\EndIf
	\end{algorithmic}
\end{algorithm}


\uniqueID\ can be viewed as traversing a labeled binary tree,
until all agents reach a node unoccupied by any other agent.
We say the \emph{level} is the maximum depth (longest code length) of any agent in the population.
Initiating a new level happens when two agents with the same code interact.
The receiver doubles the length of its code with uniformly random bits,
going twice as deep in the tree.
To ensure each agent reaches the new level quickly,
agents at deeper levels recruit other agents to that level by epidemic,
which generate random bits to reach the same code length.

The key property of this protocol is that,
in any level $\ell < \log n$,
only $O(\log n)$ time is required to increase the level.
Since we double the level when it increases,
$\log \log n$ such doublings are required to reach level $\geq \log n$,
so $O(\log n \log \log n)$ time.
Lemma~\ref{lem:quick} formalizes this claim,
explaining how the length-increasing schedule can be adjusted
to achieve a trade-off between time and memory.

\noindent
{\bf Number of reachable configurations.}
Since all codes are generated randomly,
the number of reachable configurations is infinite.
This choice is merely to simplify analysis,
allowing us to assume that all agents at a level have uniformly random codes.
However, if a finite number of reachable configurations is desired
(so that, for instance, the definition of stabilization we use is equivalent to definitions based on reachability),
it is possible to modify \uniqueID\ so that
when two agents with the same code meet,
they \emph{both} append bits that are guaranteed to be different.
The protocol still works in this case
and in fact takes strictly less expected time for the codes to become unique.
Viewing two agents with compatible codes
(i.e., one code is a prefix of the other)
as equivalent,
each new level increases the number of equivalence classes by 1.
Thus it is guaranteed that all agents will converge on unique codes of length at most $n-1$,
implying the reachable configuration space is finite.





\begin{figure}[th]
	\centering
	\begin{subfigure}[b]{\textwidth}
        \includegraphics[width=\textwidth, draft=false]{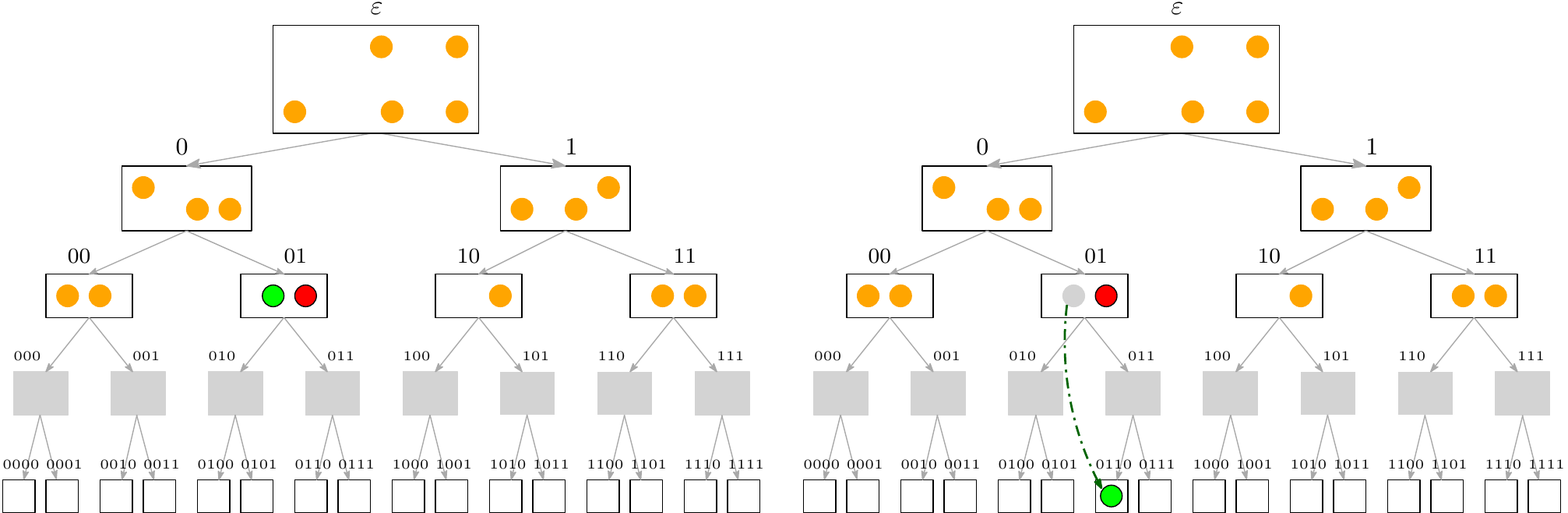}
        \caption{When two agents in the same node at level $i$ interact, receiver moves to a random descendant at level $2i$.}
        \label{fig:tree-rule1}
    \end{subfigure}
    \begin{subfigure}[b]{\textwidth}
        \includegraphics[width=\textwidth, draft=false]{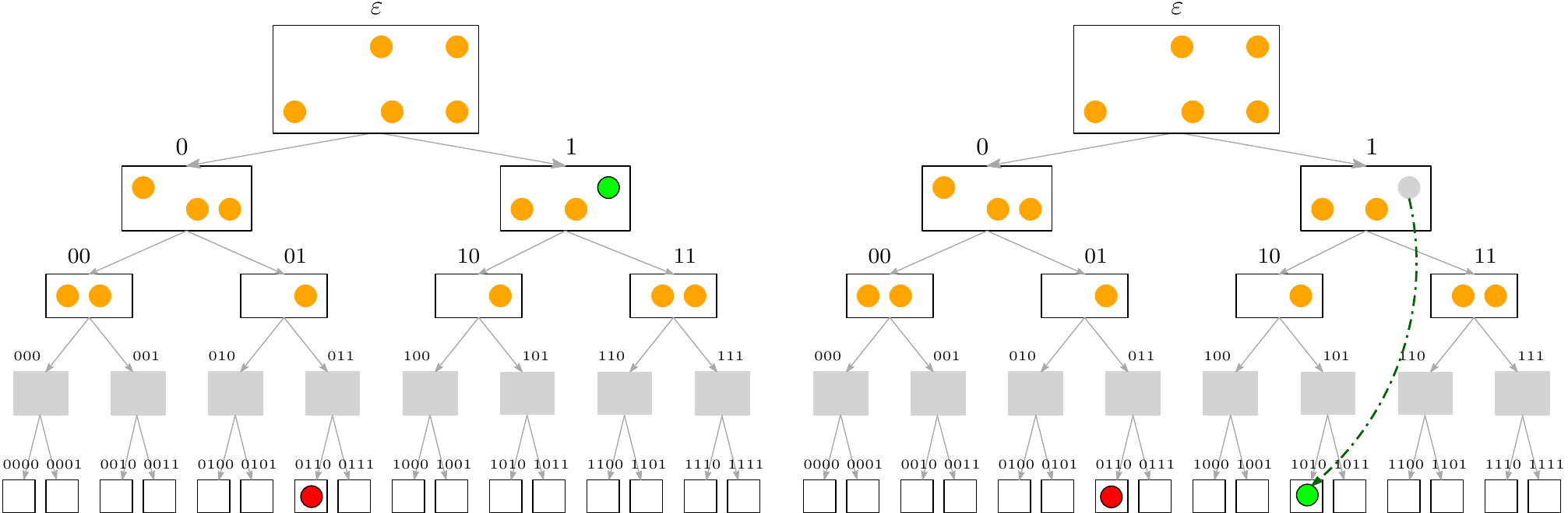}
        \caption{When sender is in a deeper level of the tree, receiver moves to a random descendant in its own subtree at the sender's level.}
        \label{fig:tree-rule2}
    \end{subfigure}
	\caption{Agents moving through the binary tree (i.e., choosing binary codes) in accordance with the \uniqueID\ subprotocol.}
	\label{fig:tree}
\end{figure}

\opt{full}{
    The following lemma is essentially Lemmas 1 and 2 from the paper \cite{AAE08}.
    However, that paper does not state how the various constants are related, which we require for our proofs.
    We recapitulate their proof, deriving those relationships explicitly.

    \begin{lemma}[\cite{AAE08}]\label{lem:epidemic}
    Let $T$ denote the time to complete an epidemic.
    Then
    $\E[T] \leq \epidemicExpectedTimeConstant \ln n$,
    $\Pr[T < \frac{1}{4} \ln n] < 2 e^{-\sqrt{n}}$,
    and for any $\alpha_u > 0$,
    $\Pr[T > \alpha_u \ln n] < 4 n^{- \alpha_u/4+1}$.
    \end{lemma}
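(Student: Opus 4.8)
The plan is to track the epidemic through the process $I_0=1,I_1,I_2,\dots$, where $I_k$ is the number of infected agents after $k$ interactions: this process increases from $1$ to $n$, going up by exactly one whenever a \emph{productive} interaction (an infected agent meeting an uninfected one) occurs. When $i$ agents are infected, the next interaction is productive with probability $p_i=\Theta\bigl(\tfrac{i(n-i)}{n^2}\bigr)$ (the exact value, $\tfrac{2i(n-i)}{n(n-1)}$, or half that in the one-way variant, affects only the constants below). Because the scheduler is memoryless, the number $G_i$ of interactions to pass from $i$ to $i+1$ infected is geometric with mean $1/p_i$, the $G_i$ are independent, and $T=\tfrac1n\sum_{i=1}^{n-1}G_i$. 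The expectation is then immediate: writing $\tfrac1{i(n-i)}=\tfrac1n\bigl(\tfrac1i+\tfrac1{n-i}\bigr)$ gives $\E[T]=\tfrac1n\sum_{i=1}^{n-1}\tfrac1{p_i}\le H_{n-1}\le \ln n+1\le\epidemicExpectedTimeConstant\ln n$ for $n\ge 2$, with room to spare (here $H_m$ denotes the $m$-th harmonic number).

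For the \textbf{upper tail} I would split the epidemic at the moment it first has $\lceil n/2\rceil$ infected agents, into a \emph{growth} phase and a \emph{saturation} phase, and bound each by time $\tfrac{\alpha_u}{2}\ln n$. Saturation: fix an agent $u$ that is still uninfected; once half the agents are infected, each further interaction infects $u$ with probability at least $\tfrac1{2n}$ ($u$ must be the agent that updates, probability $\ge\tfrac1n$, and its partner must already be infected, probability $\ge\tfrac12$), so $u$ survives $s$ more interactions with probability at most $e^{-s/(2n)}$; taking $s=\tfrac{\alpha_u}{2}n\ln n$ gives $n^{-\alpha_u/4}$ per agent, hence $n^{1-\alpha_u/4}$ after a union bound over the at most $n$ uninfected agents. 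Growth: here $p_i\ge\tfrac{i}{2n}$ for $i\le n/2$, so $\sum_{i<n/2}G_i$ is stochastically dominated by $\sum_{i<n/2}\mathrm{Geom}(\tfrac{i}{2n})$, of mean at most $2n\ln n$, and a Chernoff bound for a sum of independent geometrics shows it exceeds $\tfrac{\alpha_u}{2}n\ln n$ interactions with probability $n^{-\Omega(\alpha_u)}$. Adding the two parts and absorbing constants into the leading factor yields $4\,n^{-\alpha_u/4+1}$.

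For the \textbf{lower tail}, since $p_i\le\tfrac{2i}{n}$, the process $(I_k)$ is dominated from above by a pure-birth (branching-type) chain with up-probability $\min(1,\tfrac{2i}{n})$ at state $i$; hence the number of interactions for that chain to reach $n$ is stochastically at least $\sum_{i=1}^{\lceil n/2\rceil-1}\mathrm{Geom}(\tfrac{2i}{n})$, so
\[
  \Pr\bigl[\,T<\tfrac14\ln n\,\bigr]\;\le\;\Pr\Bigl[\,\textstyle\sum_{i=1}^{\lceil n/2\rceil-1}\mathrm{Geom}(\tfrac{2i}{n})\;\le\;\tfrac14\,n\ln n\,\Bigr].
\]
The sum on the right has mean $\approx\tfrac12 n\ln n$, so this is a lower deviation to roughly half the mean for a sum of independent geometrics, which I would handle with an exponential-moment bound: $\Pr[S\le m]\le e^{\lambda m}\prod_i\E[e^{-\lambda G_i}]\le e^{\lambda m}\prod_i\tfrac{p_i}{\lambda+p_i}$ (using $e^\lambda-1\ge\lambda$), and with $\lambda=\tfrac{2\theta}{n}$ the product telescopes to $\binom{\theta+\lceil n/2\rceil-1}{\theta}^{-1}$; choosing $\theta=\Theta(\sqrt n)$ and estimating the binomial coefficient by Stirling's formula drives this down to $e^{-\Theta(\sqrt n)}$. (This is the discrete shadow of the fact that a rate-$\Theta(1)$ Yule process needs time $\ge\tfrac12\ln n-O(1)$ to reach population $n$, the probability of being faster being $(1-n^{-1/2})^{n-1}=e^{-\Theta(\sqrt n)}$.) I expect this lower tail to be the main obstacle: a second-moment bound only gives $\Var(T)=O(1)$, hence a merely polynomially small probability, so one must exploit the exponential moments (equivalently, the branching-process structure), and matching the exact constants in the statement ($2e^{-\sqrt n}$, and likewise $4\,n^{-\alpha_u/4+1}$ above) takes some care in the Stirling estimates and in the choice of $\theta$ and the split point.
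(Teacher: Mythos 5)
Your proposal is correct in its essential structure, but it takes a genuinely different route from the paper's. The paper reduces both tail bounds to the coupon-collector estimates of Lemma~1 of \cite{AAE08}: for the upper tail it analyzes only the second half of the epidemic (invoking the symmetry of $k(n-k)$ under $k\mapsto n-k$ to cover the first half for free), first using a binomial Chernoff bound to guarantee enough interactions with an infected sender and then treating the remaining process as collecting the last $n/2$ coupons; for the lower tail it simply observes that pretending every sender is infected yields a coupon-collector process that is stochastically faster than the epidemic, so the cited bound $\Pr[S_n<\frac14 n\ln n]<2e^{-\sqrt n}$ applies directly. You instead give a self-contained analysis: your per-agent survival argument for the saturation phase is in fact cleaner and tighter than the paper's two-step reduction (it yields $n^{1-\alpha_u/4}$ in one line), and your lower-tail computation via exponential moments of independent geometrics, with the telescoping product $\prod_i \frac{i}{\theta+i}=\binom{\theta+N-1}{\theta}^{-1}$, is a legitimate replacement for the imported coupon-collector lemma. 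Two caveats on constants, both of which you partly anticipate. First, your growth-phase Chernoff bound gives an exponent of the form $-(\lambda-1-\ln\lambda)\ln n$ with $\lambda=\alpha_u/4$, i.e.\ $n^{-\alpha_u/4+1+\ln(\alpha_u/4)}$, which is weaker than the stated $n^{-\alpha_u/4+1}$ for large $\alpha_u$; the cheap fix is to borrow the paper's symmetry observation so that the growth phase inherits the same bound as your saturation phase, giving $2n^{1-\alpha_u/4}<4n^{1-\alpha_u/4}$ overall. Second, for the lower tail your optimization over $\theta$ as written (with up-probability $2i/n$) lands at roughly $e^{-\sqrt n/(2e)}$ rather than $2e^{-\sqrt n}$; using the one-way transition probabilities actually used by the paper's epidemic ($p_i\le i/(n-1)$) the same computation improves to $e^{-\Theta(n^{3/4})}$ and comfortably beats the stated bound, so the constant can be recovered, but this choice matters and should be made explicit. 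Neither issue is a conceptual gap; both are the constant-tuning you flagged.
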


    \begin{proof}
        We begin by showing $\Pr[T > \alpha_u \ln n]< 4 n^{- \alpha_u/4+1}$.
        Suppose we have $\alpha_u n \ln n$ interactions,
        starting with one infected agent.
        We want to bound the probability that any agent remains uninfected.
        The second half of an epidemic (after exactly $n/2$ agents are infected)
        has equivalent distribution to the first,
        so we analyze just the second half, bounding the probability it requires more than $(\alpha_u/2) n \ln n$ interactions.
        When half of the agents are infected,
        each interaction picks an infected sender with probability at least $1/2$.
        The number of interactions to complete the epidemic is then stochastically dominated by a binomial random variable $\mathcal{B}((\alpha_u/2) n \ln n, 1/2)$, equal to the number of heads after $(\alpha_u/2) n \ln n$ coin flips if $\Pr[\text{heads}] = 1/2$.

        Let $\mu = E[\mathcal{B}((\alpha_u/2) n \ln n, 1/2)] = (\alpha_u/4) n \ln n$ and $\delta = 2/\sqrt{n}$.
        By the Chernoff bound~\cite[Corollary 4.10]{mitzenmacher2005probability},
        \begin{eqnarray*}
    		\Pr\left[ \mathcal{B}((\alpha_u/2) n \ln n, 1/2) < (1-\delta) \mu \right]
    		& < &
            e^{- \delta^2 \mu}
            = 
            e^{- (4/n) (\alpha_u/4) n \ln n}
            = 
            e^{- \alpha_u \ln n}
            =
            n^{- \alpha_u}.
    	\end{eqnarray*}
    	So with probability at least $1-n^{-\alpha_u}$,
    	more than $((1-\delta) \alpha_u/4) n \ln n > (\alpha_u/8) n \ln n$ (since $\delta < 1/2$) interactions involve an infected sender.
    	
    	To complete the proof of the time upper bound,
    	we need to bound the probability that these $(\alpha_u/8) n \ln n$ interactions fail to infect all agents.
    	Conditioned on each interaction having an infected sender,
    	the random variable giving the number of interactions until all agents are infected is equivalent to the number of collections required to collect the last $n/2$ coupons out of $n$ total.
    	Angluin et al.~\cite[Lemma 1]{AAE08} showed that for any $\beta$,
    	it takes more than $\beta (n/2) \ln (n/2) < (\beta/2) n \ln n$ collections to collect $n$ coupons with probability at most $n^{-\beta+1}$.
    	Let $\beta = \alpha_u / 4$.
    	Then $\Pr[(\alpha_u/8) n \ln n \text{ interactions fail to infect every agent}] \leq n^{-\beta+1} < n^{-\alpha_u/4+1}$.
        By the union bound on the events
        \emph{``fewer than $(\alpha_u/8) n \ln n$ interactions involved an infected sender''}
        and
        \emph{``$(\alpha_u/8) n \ln n$ interactions fail to infect every agent''},
        the second half of the epidemic fails to complete within $(\alpha_u/2) n \ln n$ interactions
        with probability at most $n^{-\alpha_u} + n^{-\alpha_u/4+1} < 2 n^{-\alpha_u/4+1}$.

        Again by the union bound
        on the events
        \emph{``first half of the epidemic takes more than $(\alpha_u/2) n \ln n$ interactions''}
        and
        \emph{``second half of the epidemic takes more than $(\alpha_u/2) n \ln n$ interactions''},
        the whole epidemic takes more than $\alpha_u n \ln n$ interactions with probability at most $4 n^{-\alpha_u/4+1}$.

        To show $\Pr[T < \frac{1}{4} \ln n] < 2 e^{-\sqrt{n}}$,
        we note that Lemma 1 of~\cite{AAE08}
        shows that if $S_n$ is the number of times a coupon must be collected to collect all coupons,
        then $\Pr[S_n < \frac{1}{4} n \ln n] < 2 e^{-\sqrt{n}}$.
        The proof says $2 e^{-\Theta(\sqrt{n})}$,
        but inspection of the argument reveals that the big-$\Theta$ constant can be assumed to be 1.
        In this case,
        applying the coupon collector argument to the epidemic,
        since we are proving a time lower bound,
        if we assume that \emph{every} interaction involves an infected sender,
        this process stochastically dominates the real epidemic.
        Thus $\Pr[T < \frac{1}{4} \ln n] < 2 e^{-\sqrt{n}}$.

        To analyze the expected time,
        observe that when $k$ agents are infected,
        the probability that the next interaction infects an uninfected agent is
        $\frac{k (n-k)}{n(n-1)} > \frac{k(n-k)}{n^2}$,
        so expected interactions until an infection at most
        $\frac{n^2}{k(n-k)}$.
        By linearity of expectation,
        the expected number of interactions to complete the epidemic is
        \begin{eqnarray*}
            \sum_{k=1}^{n-1} \frac{n^2}{k(n-k)}
            &=&
            2 n^2 \sum_{k=1}^{n/2} \frac{1}{k(n-k)} \ \ \ \ \ \ \text{sum is symmetric about middle index}
            \\&<&
            2 n^2 \sum_{k=1}^{n/2} \frac{1}{kn/2}
            = 
            \epidemicExpectedTimeConstant n \sum_{k=1}^{n/2} \frac{1}{k}
            < 
            \epidemicExpectedTimeConstant n \ln (n/2+1) < \epidemicExpectedTimeConstant n \ln n,
        \end{eqnarray*}
        i.e., expected time $< \epidemicExpectedTimeConstant \ln n$.
    \end{proof}
}

The next lemma bounds the time for \uniqueID\ to reach level at least $\log n$,
assuming a generalized way of increasing the level,
defining $f(n)$ to be the number of times the level must increase before reaching at least level $\log n$.
Afterwards we state a corollary for our protocol,
which doubles the level whenever it increases,
so $f(n) = \log \log n$.
By using this lemma with different choices of $f$,
one can obtain a tradeoff between time and space;
if the level increases more
(corresponding to a slower-growing $f$)
this takes less time to reach level at least $\log n$,
but may overshoot $\log n$ and use more space.

Intuitively,
the lemma is proven by observing that the worst case is that the current level is $\log(n) - 1$.
It takes  $O(\log n)$ time for all agents not yet at that level to reach it by epidemic.
At that point the worst case is that codes are distributed to maximize expected time:
exactly $n/2$ codes each shared by two agents.
Then the expected time is constant for the first interaction between two such agents,
starting the next level.
Thus it takes time $O(\log n)$ to increase the level,
hence $O(f(n) \log n)$ time for the level to increase from 0 to at least $\log n$.
\opt{sub}{
    A proof is contained in the full version of this paper.
}

\begin{lemma}\label{lem:quick}
    For all $n\in\N$, define $f(n)$
    to be the number of times \uniqueID\ must increase the level (last line of \uniqueID)
    to reach level at least $\log n$.
	For all $\alpha > 0$,
    in time $5 \alpha f(n) \ln n$,
    all agents reach level at least $\log n$
    with probability at least
    $1 - 5 f(n) n^{-\alpha}$.
\end{lemma}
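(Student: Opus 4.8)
The plan is to bound the time for \uniqueID\ to pass through $f(n)$ level increases, analyzing each level increase separately and then applying a union bound over all $f(n)$ of them. Fix a level $\ell < \log n$ that the population has reached. I would split the time to increase the level into two phases. In the first phase, all agents that are not yet at level $\ell$ are recruited to level $\ell$ by the recruitment-by-epidemic mechanism built into \extendCode: whenever a receiver at a shallower level meets a sender at level $\ell$, it extends its code to length $|\sen.\C|$. This is stochastically dominated by (at worst a constant number of overlapping) epidemics, so by Lemma~\ref{lem:epidemic} it completes in time $\alpha_u \ln n$ with probability at least $1 - 4n^{-\alpha_u/4+1}$ for any $\alpha_u > 0$; I would pick $\alpha_u$ proportional to $\alpha$ so this failure probability is at most roughly $n^{-\alpha}$ up to constant factors absorbed into the ``$5$''.

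For the second phase, once every agent is at level $\ell$, the level increases as soon as two agents sharing the same code interact (the ``$\rec.\C = \sen.\C$'' branch of \uniqueID). Here I would argue the worst case: since there are $n$ agents distributed among codes of length $\ell$, if fewer than $n/2$ distinct codes are in use then some code is shared and we are essentially done in $O(1)$ expected time; the genuinely slow case is when codes are spread out as much as possible, i.e., $n/2$ codes each held by exactly two agents. Even then, each interaction is a collision (picking both members of some pair) with probability at least $\frac{n/2 \cdot 2 \cdot 1}{n(n-1)} = \frac{1}{n-1}$, so the number of interactions until a collision is stochastically dominated by a geometric random variable with success probability $\geq 1/(n-1)$; this gives expected time $O(1)$ and, more importantly, a tail bound: the probability that no collision occurs within $c n \ln n$ interactions (i.e., time $c\ln n$) is at most $(1 - 1/(n-1))^{c n \ln n} \leq e^{-c\ln n (n/(n-1))} \leq n^{-c}$. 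Choosing $c$ proportional to $\alpha$ makes this at most (a constant times) $n^{-\alpha}$.

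Combining the two phases by a union bound, a single level increase (starting from a level $< \log n$) takes time at most $5\alpha \ln n$ with failure probability at most $5 n^{-\alpha}$, where I am using the slack in the constants $5$ and $5$ to absorb the constant factors from Lemma~\ref{lem:epidemic} and from the geometric tail bound, and to combine the two phase-failure events. Now I would observe that the level only ever increases, so after at most $f(n)$ successful level increases the level is at least $\log n$; taking a union bound over these $f(n)$ increases, the total time is at most $5\alpha f(n) \ln n$ and the total failure probability is at most $5 f(n) n^{-\alpha}$, which is exactly the statement.

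The main obstacle I anticipate is making the first-phase ``epidemic'' argument rigorous: while the level is being increased, agents are simultaneously being recruited \emph{and} the level itself may jump again, and an agent recruited to level $\ell$ immediately has a fresh random code, so the set of agents ``at the current level with a given code'' is changing dynamically. The clean way around this is to note that we only need an \emph{upper} bound on time, so we may pessimistically ignore any early collisions that would speed things up, couple the recruitment process to a standard epidemic (one infected agent being the first to reach level $\ell$), and argue that once this coupled epidemic has saturated, \emph{all} agents are at level $\geq \ell$, after which the second-phase collision argument applies verbatim to whichever level is current. Care is also needed with the ``$f(n)$'' bookkeeping when a single doubling step skips past several ``unit'' levels — but since $f(n)$ is \emph{defined} as the number of times the last line of \uniqueID\ executes before the level reaches $\log n$, this is handled by definition rather than by computation.
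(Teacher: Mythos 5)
Your proposal is correct and follows essentially the same route as the paper's proof: a stochastic-domination argument in which each level increase is split into an epidemic phase (bounded via Lemma~\ref{lem:epidemic}) and a collision phase whose worst case is $n/2$ codes held by two agents each, giving collision probability at least $\frac{1}{n-1}$ and a geometric tail bound, followed by union bounds over the two phases and over the $f(n)$ level increases. The paper handles the dynamics you worried about in the same way, by analyzing the slower process in which a level increase is only enabled after all agents have reached the current level, and it is similarly loose about the exact additive constants absorbed into the two factors of $5$.
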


\opt{full}{
    \begin{proof}
        Imagine an alternate process where at each level agents wait until all other agents also reach the same level before enabling transitions that start the next level (where two agents with the same code meet and the receiver will double its code length).
        The time for such a process stochastically dominates the time for our protocol, so we can use its time as an upper bound for our protocol.
        It suffices to show that,
        when all agents are at the same level,
        it takes constant time to start the next level.
        After initializing a level, by Lemma~\ref{lem:epidemic},
        the new code length will spread by epidemic
        in time $\alpha_u \ln{n}$
        with probability at least $1-4n^{-4\alpha_u+1}$.

        Assume all agents are currently at level $i$.
        Denote by $S_j$ the number of agents in node $j$ of the tree at level $i$
        (i.e., they have the $j$'th code in $\{0,1\}^i$ in lexicographic order).
        The probability that the next interaction is between two agents at the same node (having equal codes) is minimized when $S_j = S_{j'} = n/2^i$ for all $1 \leq j,j' \leq 2^i$.
        Then for all $0 \leq i < \log n$,
        if the current level is $i$,
    	\begin{eqnarray*}
    		\Pr\left[ \text{next interaction initializes new level} \right]
    		& = &
            \frac
    		{\sum_{j=1}^{2^i}\binom{S_j}{2}}
    		{\binom{n}{2}}
            = 
            \frac
    		{\sum_{j=1}^{2^i}\binom{n / 2^i}{2}}
    		{\binom{n}{2}}
            \\ & = &
            \frac
    		{2^i (n/2^i) (n/2^i-1)}
    		{n(n-1)}
    		= 
    		\frac
    		{n/2^i-1}
    		{n-1}
    		\\ &\geq&
    		\frac{n/2^{\log(n) - 1} - 1}{n-1}
    		=
    		\frac{1}{n-1}
    		>
    		\frac{1}{n}.
    	\end{eqnarray*}
    	Therefore, the expected number of interactions to start a new level is $\leq n$, equivalently parallel time 1.
        This is a geometric random variable with success probability at least $\frac{1}{n}$.
        Then at any level $i < \log n$,
        for any $\alpha_u' > 0$,
        \begin{eqnarray*}
    		\Pr\left[ \text{initializing next level take more than $\alpha_u' n \ln n$ interactions} \right]
    		& = &
            \left(1 - \frac{1}{n} \right)^{\alpha_u' n \ln n}
            \\ & < &
            e^{- \alpha_u' \ln n}
    		= 
    		n^{- \alpha_u'}.
    	\end{eqnarray*}


        By Lemma~\ref{lem:epidemic},
        for any $\alpha_u > 0$,
        more than $\alpha_u \ln{n}$ time is required for all agents to reach this level by epidemic
        with probability at most $4 n^{- \alpha_u/4+1}$.
        By the union bound over this event and the event
        \emph{``once all agents are at a level, it takes more than time $\alpha_u' \ln n$ to start a new level''}
        (shown above to happen with probability at most $n^{-\alpha_u'}$),
        the time spent at each level is more than
        $\alpha_u \ln n + \alpha_u' \ln n = (\alpha_u + \alpha_u') \ln n$
        with probability at most
        $4 n^{- \alpha_u/4+1} + n^{-\alpha_u'}$.
        Given $\alpha > 0$,
        let $\alpha_u' = \alpha$ and $\alpha_u = 4 (\alpha+1)$.
        Then this probability bound is $4 n^{- \alpha_u/4+1} + n^{-\alpha_u'} = 5 n^{-\alpha}$.

        By the union bound over all $f(n)$ levels visited in the tree,
        it takes more than time
        $f(n) (\alpha \ln n + 4 \alpha \ln n) = 5 \alpha f(n) \ln n$
        time to reach level at least $\log n$
        with probability at most
        $5 f(n) n^{-\alpha}$.
    \end{proof}
}

The next corollary is specific to our level-doubling schedule,
used throughout the rest of the paper,
corresponding to $f(n) = \log \log n$ in Lemma~\ref{lem:quick}.

\begin{corollary}\label{cor:quick}
	In the \uniqueID\ protocol,
    for all $\alpha > 0$,
    in $5 \alpha \ln n \log \log n$ time all agents reach level at least $\log n$
    with probability at least
    $1 - \frac{5 \log \log n}{n^{\alpha}}$.
\end{corollary}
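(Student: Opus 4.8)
The plan is to obtain the corollary as a direct instantiation of Lemma~\ref{lem:quick} with $f(n) = \log\log n$, so the only real work is to confirm that the level-doubling schedule of \uniqueID\ reaches level at least $\log n$ after $\log\log n$ increases of the level.

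First I would unwind that schedule. Each agent starts with $\C = \varepsilon$, i.e.\ at level $0$. The first time two agents with equal (empty) codes interact, the receiver executes $\extendCode(\rec, \max(1,0)) = \extendCode(\rec, 1)$ and moves to level $1$; thereafter, whenever two agents with equal codes interact, the receiver executes $\extendCode(\rec, \max(1,|\rec.\C|)) = \extendCode(\rec, |\rec.\C|)$, appending exactly $|\rec.\C|$ fresh random bits and thus doubling its code length. Consequently, after $j$ increases of the level the longest code in the population has length $2^{j-1}$, so the level first reaches $\log n$ once $2^{j-1} \ge \log n$, i.e.\ once $j = \lceil \log\log n\rceil + 1$. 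Hence $f(n)$, the number of level increases needed in Lemma~\ref{lem:quick}, satisfies $f(n) = \lceil\log\log n\rceil + 1 = \Theta(\log\log n)$; for cleanliness I would take $f(n) = \log\log n$ as in the statement, the bounded rounding and the single initial $0\to 1$ step contributing only to the implicit constants.

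Then I would apply Lemma~\ref{lem:quick} verbatim with this $f$: for every $\alpha > 0$, within time $5\alpha f(n)\ln n = 5\alpha\ln n\log\log n$, all agents reach level at least $\log n$ with probability at least $1 - 5 f(n)n^{-\alpha} = 1 - \frac{5\log\log n}{n^{\alpha}}$, which is precisely the corollary.

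I do not anticipate a genuine obstacle: the content is entirely in Lemma~\ref{lem:quick}, and the corollary is just the $f(n) = \log\log n$ case. The one point deserving a sentence of care is the bookkeeping around rounding $\log\log n$ up to an integer and the trivial initial increment from level $0$ to level $1$, both of which are absorbed without affecting the stated time or probability bounds beyond constants.
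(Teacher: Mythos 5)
Your proposal is correct and matches the paper's own treatment: Corollary~\ref{cor:quick} is obtained there exactly as you do it, by instantiating Lemma~\ref{lem:quick} with $f(n)=\log\log n$ for the level-doubling schedule ($0,1,2,4,\dots,2^{j-1}$ after $j$ increases). Your extra care about the ceiling $\lceil\log\log n\rceil+1$ and the initial $0\to 1$ step is the same slack the paper silently absorbs into the stated constants, so there is no substantive difference.
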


The previous results show \uniqueID\ quickly gets to level $\log n$.
The next lemma states that it does not go too far past $\log n$.
Intuitively,
if the level is $2 \log n$,
there are $n^2$ possible codes
chosen uniformly at random among $n$ agents,
a standard birthday problem with probability $\frac{1}{e}$ of a collision,
which drops off polynomially with the level
beyond $2 \log n$.
\opt{sub}{A proof is in the full version of the paper.}

\begin{lemma}\label{lem:all_length}
    Let $\epsilon>0$.
    If the current level is $(2+\epsilon) \log{n}$,
    all codes are unique
    with probability at least $1 - \frac{1}{n^{\epsilon}}$.
\end{lemma}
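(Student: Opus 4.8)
The plan is to bound the probability that some collision occurs, i.e.\ that two distinct agents share the same code, when the level is $\ell = (2+\epsilon)\log n$. Recall that in \uniqueID\ each agent at a given level holds a uniformly random string in $\{0,1\}^{\ell}$; these strings are mutually independent across agents (this is exactly the simplification the ``infinite reachable configurations'' choice buys us, so I may invoke it). Thus I am in the classical birthday-problem setting: $n$ balls thrown independently and uniformly into $2^{\ell} = n^{2+\epsilon}$ bins, and ``all codes unique'' is the event that no bin receives two or more balls.

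First I would fix any ordered pair of distinct agents $u \neq v$ and note that $\Pr[\C_u = \C_v] = 2^{-\ell} = n^{-(2+\epsilon)}$, since $\C_v$ is uniform and independent of $\C_u$. Then I would apply a union bound over all $\binom{n}{2} < n^2/2$ unordered pairs:
\begin{equation*}
    \Pr[\text{some collision}]
    \;\le\;
    \binom{n}{2}\, n^{-(2+\epsilon)}
    \;<\;
    \frac{n^2}{2}\cdot n^{-(2+\epsilon)}
    \;=\;
    \frac{1}{2 n^{\epsilon}}
    \;<\;
    \frac{1}{n^{\epsilon}}.
\end{equation*}
Taking the complement gives that all codes are unique with probability at least $1 - \frac{1}{n^{\epsilon}}$, as claimed. (The slack of a factor $2$ means the bound even holds with room to spare, so I need not be delicate about whether the level is exactly $(2+\epsilon)\log n$ or slightly larger — the probability only improves when $\ell$ grows.)

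There is essentially no hard step here: the only thing to be careful about is the independence/uniformity claim for the codes at a fixed level, which I would justify by pointing back to the description of \extendCode\ (each agent, upon reaching a new level, appends freshly generated uniform random bits via \randbits, independently of all other agents) together with the remark that we deliberately allow infinitely many reachable configurations precisely so that this clean distributional statement holds. If one instead used the finite-configuration variant of \uniqueID\ (where colliding agents append \emph{guaranteed-distinct} bits), the codes are no longer independent, but in that case collisions are even less likely, so the bound still holds; I would mention this in one sentence rather than reprove it.
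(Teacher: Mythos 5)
Your proposal is correct and is essentially the paper's own argument: both are the standard birthday-bound computation, a union bound over the $\binom{n}{2}$ pairs of agents (the paper phrases it as summing $\frac{i-1}{c}$ over agents in order, which is the same quantity), each colliding with probability $n^{-(2+\epsilon)}$, yielding $\Pr[\text{collision}] < n^{2}/n^{2+\epsilon} = n^{-\epsilon}$. Your added remarks on the independence of codes and the finite-configuration variant match the paper's surrounding discussion, so no changes are needed.
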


\opt{full}{
    \begin{proof}
        Consider the agents in order for agent $1, 2, \ldots$.
        The code of agent $i$ collides with the code of some agent $1,2,\ldots,i-1$ with probability $\frac{i-1}{c}$,
        where $c = 2^{(2+\epsilon) \log n} = n^{2+\epsilon}$ is the number of available codes.
        Then by the union bound,
    	\begin{eqnarray*}
    		\Pr[\text{at least one collision}]
    		& \leq &
    		\sum_{i=1}^{n}\frac{i-1}{c}
            = 
    		\frac{n(n-1)}{2c}
            < 
            \frac{n^2}{n^{2+\epsilon}}
            = 
            \frac{1}{n^{\epsilon}}. \qedhere
    	\end{eqnarray*}
    \end{proof}
}

However, since the code length doubles when it changes,
not all values of $\epsilon$ in Lemma~\ref{lem:all_length} correspond to a level actually visited.
It could overshoot by factor two,
giving the following.

\begin{corollary}\label{cor:all_length}
Let $\epsilon>0$.
The eventual code length of each agent is $< (4+2\epsilon) \log{n}$
with probability at least $1 - \frac{1}{n^{\epsilon}}$.
\end{corollary}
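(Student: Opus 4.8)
The plan is to deduce this directly from Lemma~\ref{lem:all_length}, using the single structural fact that \uniqueID\ \emph{doubles} the code length every time it increases it. Hence, after the first increase (from $0$ to $1$), the only code lengths ever attained in the population are the powers of two $1,2,4,8,\dots$. In particular, the first level ever reached that is at least $(2+\epsilon)\log n$ cannot overshoot this threshold by more than a factor of two.

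Concretely, I would first fix $\ell^{*}$ to be the smallest code length ever attained by the protocol with $\ell^{*}\ge (2+\epsilon)\log n$. For all sufficiently large $n$ we have $(2+\epsilon)\log n > 2$ (the remaining finitely many small $n$ can be dismissed, since for them $1-n^{-\epsilon}$ is non-positive or the claim is checked by hand), so $\ell^{*}=2^{k}$ for the integer $k$ with $2^{k-1} < (2+\epsilon)\log n \le 2^{k}$. Doubling the lower bound on $2^{k-1}$ gives $\ell^{*} = 2\cdot 2^{k-1} < 2(2+\epsilon)\log n = (4+2\epsilon)\log n$.

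Next, write $\ell^{*} = (2+\epsilon')\log n$ with $\epsilon' \ge \epsilon$. At the moment the protocol reaches level $\ell^{*}$, each agent's code is a uniformly random string in $\{0,1\}^{\ell^{*}}$ (using the analysis convention, justified in the ``Number of reachable configurations'' paragraph, that codes at any level are uniformly random), so Lemma~\ref{lem:all_length} applied with $\epsilon'$ in place of $\epsilon$ gives that all codes are unique at level $\ell^{*}$ with probability at least $1 - n^{-\epsilon'} \ge 1 - n^{-\epsilon}$, the last inequality because $n^{-x}$ is non-increasing in $x$. Finally, the level increases only when two agents share a code; if all codes are already unique upon reaching level $\ell^{*}$, the level never changes again, so the eventual code length of every agent equals $\ell^{*} < (4+2\epsilon)\log n$. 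Combining, the eventual code length is $< (4+2\epsilon)\log n$ with probability at least $1 - n^{-\epsilon}$.

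I do not expect a genuine obstacle: the argument is a short deduction from Lemma~\ref{lem:all_length}. The only point needing care is the ``overshoot by a factor of two'' bookkeeping — observing that the doubling schedule forces the first threshold-crossing level into the interval $[(2+\epsilon)\log n,\,(4+2\epsilon)\log n)$ — together with re-invoking the lemma at the effective parameter $\epsilon'\ge\epsilon$, which can only improve the probability bound.
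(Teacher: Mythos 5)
Your argument is correct and essentially the same as the paper's proof: both identify the unique power-of-two level falling in $[(2+\epsilon)\log n,\,(4+2\epsilon)\log n)$ (your $\ell^*$) and apply Lemma~\ref{lem:all_length} at that level with effective parameter $\epsilon'\geq\epsilon$, i.e., the worst case $k=(2+\epsilon)\log n$. The only cosmetic omission is the trivial case where all codes become unique before any level $\geq(2+\epsilon)\log n$ is reached, in which the stated bound holds with probability $1$.
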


\opt{full}{
    \begin{proof}
        Recall that a new level of the tree is initiated when two agents with the same code interact.
        Since the level is doubled in this case, the code lengths exceed $(4+2\epsilon) \log n$ if there was a duplicate code at the power-of-two level $k$ such that
        $(2+\epsilon) \log n \leq k < (4+2\epsilon) \log n \leq 2k$.
        Over all $k$ satisfying this inequality,
        the probability of a duplicate code is largest if $k = (2+\epsilon) \log n$.
        Applying Lemma~\ref{lem:all_length}
        gives the stated probability bound.
    \end{proof}
}

\subsection{\electLeader}
\label{subsec:leaderElection}

\begin{algorithm}[ht]	
	\floatname{algorithm}{Subprotocol}
	\caption{$\electLeader(\rec, \sen)$}
	\label{protocol:election-one-way}
	\begin{algorithmic}[100]
		\State{}
        \State {$p \gets \min(|\rec.\LC|, |\sen.\LC|)$}
		\If {$\rec.\LC[1..p]$ lexicographically precedes $\sen.\LC[1..p]$}
		    \LeftComment{Propagate by epidemic the lexicographically greatest leader code}
    		\State {$\rec.\isLeader \gets \false$}
    		\State {$\setNewLeaderCode(\rec, \sen.\LC)$}
		\EndIf
		\If {(not $\rec.\isLeader$) and ($|\rec.\LC| < |\sen.\LC|$)}
		    \LeftComment{Ensure all leader codes eventually have equal length}
		    \State {$\setNewLeaderCode(\rec, \sen.\LC)$}
		\EndIf
	\end{algorithmic}
\end{algorithm}

\electLeader\ works by propagating by epidemic the ``winning'' leader code,
where a candidate leader drops out if they see an agent (whether leader or follower)
with a leader code that beats its own.
The trick is to define ``win''.
We compare the shorter leader code with the same-length prefix of the other.
If they disagree,
the lexicographically largest wins.
To ensure all leader code lengths are eventually equal,
a follower with the shorter leader code replaces it with the longer one.\footnote{
Leaders with shorter codes do not replace with longer codes,
because it may be that after adding new random bits to get to the current population level,
that agent would have the lexicographically largest leader code.
This is because a leader with a shorter leader code $\LC$ also has a shorter code $\C$,
so will eventually catch up in leader code length through the \uniqueID\ protocol.
}

The next lemma shows that the leader is probably unique when the population reaches level at least $\log n$.
Let $k \in \N$ be such that
$\log n \leq k$.
When the candidate leaders generate new values of $\LC$ upon reaching level $k$,
$|\LC| = 2k \geq 2 \log n$.
Since there are at least $n^2$ strings of length $2k \geq 2 \log n$,
in the worst case,
even if all $n$ agents remain candidate leaders at that time,
the probability that the lexicographically greatest leader code is duplicated is at most $\frac{1}{n}$.\footnote{
    This almost looks like a birthday problem, but we don't need all the remaining candidate leaders to have a unique leader code, only that the \emph{largest} code appears only once.
}
Thus, with probability at least $1 - \frac{1}{n}$,
one unique leader has the maximum leader code,
and in $O(\log n)$ time this leader code reaches the remaining candidate leaders by epidemic,
who drop out.
\opt{sub}{A proof is given in the full version of this paper.}


\begin{lemma}\label{lem:uniqueleader}
	At any level $\geq \log n$, with probability $\geq 1 - \frac{1}{n}$, there is a unique leader.
\end{lemma}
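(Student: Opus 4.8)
The plan is to fix a level $k$ with $\log n \le k$ (this suffices, since the level is nondecreasing, and by Corollary~\ref{cor:quick} it reaches $\log n$ quickly), put $N = 2^{2k} \ge n^2$, and attach to each agent $u$ the string $R_u \in \{0,1\}^{2k}$ consisting of the first $2k$ random bits $u$ ever feeds into \extendCode\ while building its leader code. Two bookkeeping facts about \uniqueID/\extendCode\ make $R_u$ the right object: first, code lengths are always $0$, $1$, or a power of two, and a \emph{candidate leader} (one with $\isLeader = \true$) at code length $\ell \le k$ always has $\LC = R_u[1..2\ell]$, because each extension only appends fresh bits and preserves $|\LC| = 2|\C|$; second, every leader-code value ever present in the population — including those held by followers, who only \emph{copy} leader codes via \setNewLeaderCode — is a prefix $R_v[1..m]$ of some agent $v$'s string. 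Since distinct agents use independent random tapes, the strings $\{R_u\}_{u \in V}$ are i.i.d.\ uniform on $\{0,1\}^{2k}$.

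Next I would define the good event $G$ that there is a unique agent $w^*$ whose string $R_{w^*}$ is lexicographically largest, and bound $\Pr[G^c]$. This is a ``near-birthday'' estimate: we do \emph{not} need all the $R_u$ distinct, only the maximum unique. Summing the probability that the lexicographic maximum equals each value $v$ and telescoping, one gets $\Pr[G^c] = 1 - \frac{n}{N^{n}}\sum_{v=0}^{N-1} v^{n-1} \le 1 - \bigl(1-\tfrac1N\bigr)^{n} \le \tfrac{n}{N} \le \tfrac1n$, using $\sum_{v=0}^{N-1} v^{n-1} \ge (N-1)^n/n$, Bernoulli's inequality, and $N \ge n^2$.

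On $G$, I claim $w^*$ is the unique leader. First, $w^*$ can never become a follower: \electLeader\ turns an agent into a follower only upon seeing some leader code $s$ with $w^*.\LC[1..p] \prec s[1..p]$ for $p = \min(|w^*.\LC|, |s|)$; by the bookkeeping facts $w^*.\LC$ is a prefix of $R_{w^*}$ and $s$ a prefix of some $R_v$, so the test reads $R_{w^*}[1..p] \prec R_v[1..p]$, which fails because $R_{w^*} \succeq R_v$ and lexicographic order is monotone under taking equal-length prefixes — hence $R_{w^*}$ dominates on \emph{every} prefix length, in particular against a candidate leader lagging at a low level whose short leader code merely ties $w^*$ on their common prefix. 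Second, every $u \neq w^*$ eventually drops out: by Corollary~\ref{cor:quick}/Lemma~\ref{lem:quick} all agents reach level $\ge k$, after which the \emph{strict} inequality $R_u \prec R_{w^*}$ forces a disagreement within the first $2k$ bits, so when $u$ meets $w^*$ (or any agent carrying a long enough prefix of $R_{w^*}$, which spreads by epidemic) it drops out; by Lemma~\ref{lem:epidemic} this finishes in $O(\log n)$ additional time. Thus on $G$ exactly one agent remains a leader.

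The hard part will be the two bookkeeping facts together with the prefix-monotonicity argument behind the first claim: one must verify that every leader code circulating in the population (copies held by followers included) is genuinely a prefix of one of the random strings $R_v$, and hence that the globally lexicographically largest \emph{full} string $R_{w^*}$ wins \emph{every} comparison at \emph{every} code length that can arise during a run — so that a candidate leader lagging at a low level cannot first tie and later overtake $w^*$. By contrast, the probability bound only needs the minor care of estimating the max-collision probability rather than the (too weak) all-pairs collision probability.
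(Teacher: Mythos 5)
Your proposal is correct and follows essentially the same route as the paper: both arguments rest on the observation that at level $k \geq \log n$ the leader codes are (effectively) uniform strings over $\geq n^2$ possibilities, so the lexicographically \emph{largest} one is held by a single agent with probability $\geq 1 - \frac{1}{n}$, and that agent survives while all others are eliminated by epidemic. The only differences are cosmetic: you bound the unique-maximum probability by a direct computation ($\leq n/N \leq 1/n$) where the paper uses a Markov bound on collisions with the max-code agent, and you spell out the prefix-monotonicity bookkeeping (every circulating $\LC$ is a prefix of some $R_v$, so the holder of the maximal string is never demoted) that the paper leaves implicit; your coupling to pre-written random tapes handles the fact that agents demoted early never realize all of $R_u$.
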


\opt{full}{
    \begin{proof}
    	Every remaining candidate leader at level $\geq \log n$ has a leader code with at least $2\log n$ bits.
    	We say $i$ and $j$ \emph{collide} if agents $i$ and $j$ are candidate leaders with the same leader code.
    	Let $X_{i,j}$ be the indicator variable: 
    	
    	$$
    	X_{i,j} = \left\{
    	\begin{array}{ll}
    	1 & \text{if $i$ and $j$ collide} \\
    	0 & \text{otherwise}
    	\end{array}
    	\right.
    	$$
    	
    	Let $c \geq 2^{2 \log n} = n^2$ be the number of possible leader codes.
        Note $\Pr[X_{i,j} = 1] = \frac{1}{c}$.
    	Let $X_i = \sum_{j \neq i}{X_{i,j}}$;
        by linearity of expectation $\E[X_{i}] = \frac{n-1}{c}$.
    	Since $c > n^2$,
        $\E[X_i] < \frac{n-1}{n^2} < \frac{1}{n}$.
    	By Markov's inequality,
    	$
        \Pr\left[ X_i \geq 1 \right] \leq \frac{1}{n},
        $
        and the event $X_i \geq 1$ is equivalent to the event that the leader code of agent $i$ is not unique.
    	If we set $i$ to be the agent with the lexicographically greatest leader code of any remaining candidate leader,
        we conclude that leader is unique with probability $\geq 1 - \frac{1}{n}$.
    \end{proof}
}

By Corollary~\ref{cor:quick},
the protocol reaches level $\log n$ in $O(\log n \log \log n)$ time.
Thus,
by Lemma~\ref{lem:uniqueleader},
with high probability, in time $O(\log n \log \log n)$
\electLeader\ converges (the second-to-last candidate leader is eliminated).
Unfortunately \electLeader\ is not terminating:
the remaining leader does not know when it becomes unique.
Thus it is not straightforward to compose it with the downstream protocols
\averaging\ and \timer.
Standard techniques for making the protocol terminating with high probability,
such as setting a timer for a termination signal that probably goes off only
after $K \log n \log \log n$ time for a large constant $K$,
do not apply here, because when we start we don't know the value $\log n \log \log n$.
Thus, it is necessary,
each time a leader adds to its code length,
to restart the downstream protocols the existence of a unique leader.\footnote{
    One might imagine restarts could be tied to the elimination of candidate leaders,
    which stops within $O(\log n \log \log n)$ time,
    rather than the extending of codes,
    which persists for $\Omega(n)$ time.
    However, the leader may become unique \emph{before} level $\log n$,
    when $|\C| < \log n$,
    so $\M = 3 \cdot 2^{2|\C|} < 3 \cdot n^3$
    is not sufficiently large to ensure correctness and speed of \averaging.
    (See Lemma~\ref{lem:round}, which is applied with $c=1$.)}
This is done in \setNewLeaderCode,
which is actually called by both \electLeader\ \emph{and} \uniqueID,
since extending $\C$ for a leader also requires extending $\LC$,
to maintain that $2|\C| = |\LC|$.

\todoi{DD: There's a corollary in the comments below here that I commented out because we don't use it anywhere.}



\subsection{\averaging}
\label{subsec:averaging}

\begin{algorithm}[ht]
	\floatname{algorithm}{Subprotocol}
	\caption{$\averaging(\rec, \sen)$}
	\label{protocol:avg-one-way}
	\begin{algorithmic}[100]
		\State {$\rec.\ave, \sen.\ave \gets
		\ceil{  \dfrac{\rec.\ave + \sen.\ave}{2} } ,
		\floor{ \dfrac{\rec.\ave + \sen.\ave}{2} }$}
	\end{algorithmic}
\end{algorithm}

The previous subsections described how to set up a protocol (perhaps restarted many times)
to elect a leader and to produce a value $\M \geq \lowerBoundOnM$.
(With high probability we also have $\M \leq 3 \cdot n^{\Mconstant}$.)
Thus we assume the initial configuration of this protocol is one leader and $n-1$ followers,
each storing this value $\M$,
and that the goal is for all of them to converge to a value in $\ave$
such that $n = \floor{ \frac{\M}{\ave} + \frac{1}{2} }$.

There is an existing \emph{nonuniform} protocol~\cite{MocquardAABS2015}
that can do the following in $O(\log n)$ time.
Each agent starts with a bit $b \in \{0,1\}$ and a number $\M = \Omega(n^{3/2})$.
Let $n_b$ be the (unknown) number of agents storing bit $b$, so that $n_0+n_1=n$.
The agents converge to a state in which they all report the value $n_1 - n_0$,
the initial different in counts between the two bits.

Their protocol requires that $\M \geq \floor{n^{3/2} / \sqrt{2 \delta}}$
to obtain an error probability of $\leq \delta$.
The protocol is elegantly simple:
agents with $b=0$ start with an integer value $-\M$, while agents with $b=1$ start with an integer value $\M$,
and state space integers in the interval $\{-\M,-\M+1,\ldots,\M-1,\M\}$.
When two agents meet, they average their values, with one taking a floor and the other a ceiling in case the sum of the values is odd.
If an agent holds value $x$, that agent's output is reported as $\lfloor nx/\M + 1/2 \rfloor$,
i.e., $nx/\M$ rounded to the nearest integer.
This eventually converges to all agents sharing the population-wide average
$(n_1-n_0) \frac{\M}{n}$,
and the estimates of this average get close enough for the output to be correct within $O(\log n)$ time~\cite{MocquardAABS2015}.

Our protocol essentially inverts this,
starting with a known $n_0 = 1$ and $n_1 = n-1$,
computing the population size as a function of the average.
The leader starts with value $\ave = \M$,
and followers start with $\ave = 0$,
and the state space is $\{0,1,\dots,\M\}$.
The population-wide sum is always $\M$.\footnote{
    Think of the leader starting with $\M$ ``balls''.
    Interacting agents exchange balls until they have an equal number,
    or within 1.}
Eventually all agents have
$\ave = \lceil \frac{\M}{n} \rceil$ or $\lfloor \frac{\M}{n} \rfloor$,
which could take linear time in the worst case.
We show below that
with probability at least $1 - n^{-c}$,
in $O(\log n)$ time,
all agents' ave values are within $n^c$ of $\frac{\M}{n}$.
Each agent reports the population size as
$\floor{\frac{\M}{\ave} + \frac{1}{2}}$.
This is the exact population size $n$
as long as $\M \geq 3n^{c+2}$
and $\ave$ is within $n^c$ of $\frac{\M}{n}$,
as the following lemma shows.
\opt{sub}{A proof is given in the full version of this paper.}

\renewcommand{\ceil}[1]{\left\lceil #1 \right\rceil}
\renewcommand{\floor}[1]{\left\lfloor #1 \right\rfloor}

\begin{lemma} \label{lem:round}
    Let $c \geq 0$.
	If $M \geq 3n^{c+2}$, and $x \in \left[ \frac{M}{n} - n^{c}, \frac{M}{n} + n^{c} \right]$,
	then
	$\floor{\frac{M}{x} + \frac{1}{2}} = n$.
\end{lemma}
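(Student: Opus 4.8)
The plan is to show that $x$ being within $n^c$ of $M/n$ means $M/x$ is within $1/2$ of $n$, so that rounding $M/x$ to the nearest integer yields exactly $n$. Concretely, I would bound $\left|\frac{M}{x} - n\right|$ from above by $\frac{1}{2}$, which immediately gives $\floor{\frac{M}{x}+\frac{1}{2}} = n$.

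First I would write $\left|\frac{M}{x} - n\right| = \frac{|M - nx|}{x}$. The numerator satisfies $|M - nx| = n\left|\frac{M}{n} - x\right| \leq n \cdot n^c = n^{c+1}$ by the hypothesis on $x$. For the denominator, since $x \geq \frac{M}{n} - n^c$ and $M \geq 3n^{c+2}$, we get $\frac{M}{n} \geq 3 n^{c+1}$, hence $x \geq 3n^{c+1} - n^c \geq 2 n^{c+1}$ (using $n^{c+1} \geq n^c$, valid since $n \geq 1$). Combining, $\left|\frac{M}{x} - n\right| \leq \frac{n^{c+1}}{2 n^{c+1}} = \frac{1}{2}$.

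The only subtlety is that rounding-to-nearest via $\floor{y + \frac{1}{2}}$ returns $n$ precisely when $y \in \left[n - \frac{1}{2}, n + \frac{1}{2}\right)$, i.e., the right endpoint is open. So I should check the boundary case where $\frac{M}{x} - n = \frac{1}{2}$ exactly is excluded, or at least handled. In fact the bound can be made strict: either the inequality $|M-nx| \leq n^{c+1}$ is strict, or $x > 2n^{c+1}$ strictly (the latter holds whenever $n \geq 2$ since then $3n^{c+1} - n^c > 2n^{c+1}$), so $\left|\frac{M}{x} - n\right| < \frac{1}{2}$ except possibly in degenerate small cases which can be checked directly (e.g. $n=1$ forces $M \geq 3$ and $x \in [M-1, M+1]$, giving $\frac{M}{x} \in [\frac{M}{M+1}, \frac{M}{M-1}]$, and $\floor{\frac{M}{x}+\frac12} = 1$). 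I do not anticipate any real obstacle here; the main thing to get right is the direction of the inequalities and the open/closed endpoint in the definition of rounding.
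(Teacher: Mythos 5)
Your argument is correct and is essentially the paper's proof in a slightly different packaging: the paper uses monotonicity of $\floor{\frac{M}{x}+\frac12}$ in $x$ and verifies the two endpoints $x=\frac{M}{n}\pm n^c$, showing $n-\frac12 < \frac{M}{x} < n+\frac12$ from $M\ge 3n^{c+2}$, which is exactly your bound $\left|\frac{M}{x}-n\right|<\frac12$ evaluated at the extremes. One small caution: your ``direct check'' of $n=1$ is actually false at the corner $c=0$, $M=3$, $x=2$, where $\floor{\frac{M}{x}+\frac12}=2$; the paper's own strict inequality likewise degenerates to equality exactly at $n=1$, so both proofs tacitly assume $n\ge 2$, which is harmless since that is the only regime in which the lemma is applied.
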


\opt{full}{
    \begin{proof}
        Since $\floor{\frac{M}{x} + \frac{1}{2}}$ is monotone in $x$,
        it suffices to show this holds for the two endpoints of the interval.
        For the case
        $x = \frac{M}{n} - n^{c}$,
        since $x < \frac{M}{n}$,
        we have
        $n < \frac{M}{x}$,
        and
        \begin{eqnarray*}
    		\frac{M}{x}
    		&=&
    		\frac{M}{\frac{M}{n} - n^{c}}
    		=
    		\frac{M}{\frac{M-n^{c+1}}{n}}
    		=
    		\frac{Mn}{M-n^{c+1}}
    		\\&\leq&
    		\frac{Mn}{M-M/(3n)} \ \ \ \ \text{since $M \geq 3n^{c+2}$}
    		\\&=&
    		\frac{n}{1-1/(3n)}
    		=
    		\frac{n}{(3n-1)/(3n)}
    		=
    		\frac{3n^2}{3n-1}
    		=
    	    n + \frac{1}{3(3n-1)} + \frac{1}{3}
    	    <
    	    n + \frac{1}{2}.
    	\end{eqnarray*}
    	So $n < \frac{M}{x} < n+\frac{1}{2}$,
    	so $\floor{\frac{M}{x} + \frac{1}{2}} = n$.
    	In the case $x = \frac{M}{n} + n^{c}$,
    	a similar argument shows that
    	$n - \frac{1}{2} < x < n$.
    \end{proof}
}

The above results show that the count computed by \averaging\ is correct if $\M$ is sufficiently large
and $\ave$ is within a certain range of the true population-wide average $\frac{\M}{n}$.
The next lemma,
adapted from~\cite[Corollary 8]{MocquardAABS2015},
shows that each agent's \ave\ estimate quickly gets within that range.
That corollary is stated in terms of a general upper bound $K$ on how far each agent's
\ave\ field starts from the true population-wide average.
In our case,
this is given by the leader,
which starts with $\ave = \M$,
while the true average is $\frac{\M}{n}$,
so we choose $K=\M > \M-\frac{\M}{n}$
in Corollary 8 of~\cite{MocquardAABS2015},
giving the following.

\begin{lemma}[\cite{MocquardAABS2015}]
    For all $\delta \in (0,1)$
    and all $t \geq \ln(4M^2)$,
    with probability at least $1-\delta$,
    after time $t$,
    each agent's \ave\ field is in the interval
    $\left[ \frac{M}{n} - \sqrt{\frac{n}{2\delta}}, \frac{M}{n} + \sqrt{\frac{n}{2\delta}} \right]$.
\end{lemma}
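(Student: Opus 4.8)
The plan is to reduce this statement directly to Corollary 8 of Mocquard et al.~\cite{MocquardAABS2015} by matching up the parameters. That corollary analyzes exactly the averaging dynamics used in \averaging: repeatedly pick a random pair and replace their values by the rounded halves, and it bounds how quickly every agent's value concentrates around the population-wide average. In their notation, one has a bound $K$ on the initial maximum deviation of any agent's value from the average, and the corollary states that after parallel time $t \geq \ln(c K^2)$ (for the appropriate absolute constant, here $4$), with probability at least $1-\delta$ every agent's value lies within $\sqrt{n/(2\delta)}$ of the average. So the work is: (i) identify the population-wide average in our instance, (ii) choose $K$, (iii) verify the hypothesis $t \geq \ln(4M^2)$ is the instantiation of $t \geq \ln(4K^2)$.

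First I would observe that in \averaging\ the sum of all \ave\ fields is invariant: the leader starts at $\ave = \M$ and the $n-1$ followers start at $\ave = 0$, so the total is $\M$ and stays $\M$ under every averaging step (the ceiling on one side exactly compensates the floor on the other). Hence the population-wide average is $\frac{\M}{n}$, which is the center of the target interval in the statement. Next, the initial deviations from this average are: $\M - \frac{\M}{n}$ for the leader and $\frac{\M}{n} - 0 = \frac{\M}{n}$ for each follower; both are strictly less than $\M$. So $K = \M$ is a valid (crude but sufficient) choice of upper bound on the initial deviation, giving the error radius $\sqrt{n/(2\delta)}$ and the time threshold $\ln(4K^2) = \ln(4\M^2)$, exactly as stated. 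Plugging $K = \M$ and the chosen $\delta$ into Corollary 8 of~\cite{MocquardAABS2015} yields the lemma verbatim.

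The only mild subtlety — and the step I would be most careful about — is making sure the cited corollary's model genuinely coincides with ours: that it is stated for the discrete \emph{integer} averaging with floor/ceiling rounding (not real-valued averaging), and that ``time'' there is parallel time (interactions divided by $n$), matching our convention. Both are true of~\cite{MocquardAABS2015}; the rounding only ever perturbs a value by at most $\tfrac12$ relative to the exact average of the pair, and their analysis already accounts for this, so no extra slack is needed. Given that, there is nothing further to prove: the lemma is simply Corollary 8 of~\cite{MocquardAABS2015} with $K = \M$.
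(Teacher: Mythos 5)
Your proposal matches the paper exactly: the paper gives no separate proof, but simply invokes Corollary~8 of Mocquard et al.\ with the bound $K=\M$ on the initial deviation from the population-wide average $\frac{\M}{n}$ (noting $\M > \M - \frac{\M}{n}$), which yields the stated interval and the time threshold $\ln(4\M^2)$. Your additional checks (conservation of the sum, discrete floor/ceiling averaging, parallel-time convention) are correct and consistent with the paper's reasoning.
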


\begin{corollary}\label{cor:ave-close}
    Let $c>0$ and let $\delta = \frac{1}{2n^{2c-1}}$.
    For all $t \geq \ln(4M^2)$,
    with probability at least $1-\delta$,
    within time $t$,
    each agent's \ave\ field is in the interval
    $\left[ \frac{M}{n} - n^c, \frac{M}{n} + n^c \right]$.
\end{corollary}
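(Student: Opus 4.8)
The plan is to derive this as an immediate consequence of the preceding lemma (adapted from~\cite{MocquardAABS2015}) by simply substituting the chosen value of $\delta$ and checking that the resulting interval radius is exactly $n^c$. First I would recall that the lemma guarantees, for any $\delta \in (0,1)$ and any $t \geq \ln(4M^2)$, that with probability at least $1-\delta$, after time $t$ every agent's \ave\ field lies in $\left[ \frac{M}{n} - \sqrt{\frac{n}{2\delta}}, \frac{M}{n} + \sqrt{\frac{n}{2\delta}} \right]$. So it suffices to verify two things: that $\delta = \frac{1}{2n^{2c-1}}$ is a legal choice (i.e. lies in $(0,1)$), and that $\sqrt{\frac{n}{2\delta}} = n^c$.

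For the second point, the computation is routine: with $\delta = \frac{1}{2n^{2c-1}}$ we get $2\delta = \frac{1}{n^{2c-1}}$, hence $\frac{n}{2\delta} = n \cdot n^{2c-1} = n^{2c}$, so $\sqrt{\frac{n}{2\delta}} = n^c$. Plugging this radius back into the lemma's interval yields exactly $\left[ \frac{M}{n} - n^c, \frac{M}{n} + n^c \right]$, which is the claimed interval. For the first point, since $c > 0$ and $n \geq 1$ we have $n^{2c-1} \geq n^{-1} > 0$ (and in the regime of interest $n$ is large and $2c - 1$ can be assumed nonnegative, or in any case $2 n^{2c-1} > 1$ whenever $n^{2c-1} \geq 1$), so $\delta = \frac{1}{2 n^{2c-1}} \in (0,1)$; if one wants the statement to hold for all $n$, note that even for small $n$ the only failure mode is $\delta \geq 1$, in which case the probability bound $1-\delta$ is vacuous and the statement holds trivially. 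The hypothesis $t \geq \ln(4M^2)$ is carried over verbatim from the lemma, so no additional work is needed there.

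I do not expect any genuine obstacle here: the corollary is purely a bookkeeping specialization of the quoted lemma, and the entire content is the one-line algebraic identity $\sqrt{n/(2\delta)} = n^c$ for the stated $\delta$. The only thing to be slightly careful about is the edge case where $\delta$ fails to be a probability (small $n$ together with $c$ close to $0$), which is handled by observing that the conclusion is trivially true when the probability guarantee is vacuous. Thus the proof is a two-sentence substitution argument, and I would write it as such.
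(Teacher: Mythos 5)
Your proposal is correct and matches the paper's (implicit) argument exactly: the corollary is obtained by substituting $\delta = \frac{1}{2n^{2c-1}}$ into the preceding lemma and noting $\sqrt{n/(2\delta)} = \sqrt{n \cdot n^{2c-1}} = n^c$. The extra remark about the vacuous case $\delta \geq 1$ is fine but not needed.
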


Setting $c=1$ (so $\delta = \frac{1}{2n}$) gives the following corollary.

\begin{corollary}\label{cor:ave-close-prob-one-over-n}
    For all $t \geq \ln(4M^2)$,
    with probability at least $1-\frac{1}{2n}$,
    within time $t$,
    each agent's \ave\ field is in the interval
    $\left[ \frac{M}{n} - n, \frac{M}{n} + n \right]$.
\end{corollary}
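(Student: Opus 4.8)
The plan is to obtain Corollary~\ref{cor:ave-close-prob-one-over-n} as a direct specialization of Corollary~\ref{cor:ave-close} to the parameter value $c = 1$. First I would set $c = 1$ in the definition $\delta = \frac{1}{2n^{2c-1}}$ and check that this yields $\delta = \frac{1}{2n^{2\cdot 1 - 1}} = \frac{1}{2n}$, which for $n \geq 2$ is a legitimate value in $(0,1)$, so the hypotheses of Corollary~\ref{cor:ave-close} are met. Next I would substitute $c = 1$ into the interval $\left[\frac{M}{n} - n^c, \frac{M}{n} + n^c\right]$ appearing in that corollary, obtaining exactly $\left[\frac{M}{n} - n, \frac{M}{n} + n\right]$. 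Since the time hypothesis $t \geq \ln(4M^2)$ and the probability guarantee $1 - \delta$ carry over verbatim, invoking Corollary~\ref{cor:ave-close} with this choice of $c$ immediately gives the statement.

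There is essentially no obstacle: all of the genuine content — that the pairwise-averaging dynamics drive every agent's \ave\ field to within the stated additive error of the population-wide average $\frac{M}{n}$ within logarithmic time — is already encapsulated in Corollary~\ref{cor:ave-close}, which itself is the adaptation of Corollary~8 of~\cite{MocquardAABS2015}. The only thing worth verifying explicitly is the arithmetic consistency between the two layers: in the underlying lemma the half-width of the interval is $\sqrt{\frac{n}{2\delta}}$, and with $\delta = \frac{1}{2n}$ this is $\sqrt{\frac{n}{2\cdot\frac{1}{2n}}} = \sqrt{n^2} = n$, matching the interval $\left[\frac{M}{n} - n, \frac{M}{n} + n\right]$ claimed here. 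Having confirmed this, the corollary follows with a one-line proof.
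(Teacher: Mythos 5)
Your proposal is correct and matches the paper's own derivation, which simply states ``Setting $c=1$ (so $\delta = \frac{1}{2n}$)'' in Corollary~\ref{cor:ave-close}. The extra arithmetic check that the half-width $\sqrt{n/(2\delta)}$ becomes $n$ is a sound sanity check but not needed beyond what Corollary~\ref{cor:ave-close} already provides.
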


\subsection{\timer}
\label{subsec:timer}

Note that \averaging\ does not actually write the value $\floor{\frac{\M}{\ave} + \frac{1}{2}}$ into the \thecount\ field;
that is the job of the \timer\ protocol, which we now explain.
The leader is guaranteed with high probability to become unique at least by level $\log n$ (Lemma~\ref{lem:uniqueleader}).
However,
since \uniqueID\ likely continues after this point,
although the leader is unique,
when its level increases,
the leader will again generate more bits for its leader code,
updating its value $\M$,
initiating a restart of \averaging.
The problem is that although we can prove that the agents likely
reach level $\log n$ in $O(\log n \log \log n)$ time,
it may take much longer to reach subsequent levels.
Thus,
although the value $\M$ estimated at any level $k \geq \log n$ is large enough for \averaging\ to be correct,
if \averaging\ were to blindly write $\floor{\frac{\M}{\ave} + \frac{1}{2}}$ into \thecount\ each time \ave\ changes,
the output will be disrupted while this restart of \averaging\ converges. 

\begin{algorithm}[ht]
	\floatname{algorithm}{Subprotocol}
	\caption{Timer(rec, sen)}
	\label{protocol:timer-one-way}
	\begin{algorithmic}[100]
        \LeftComment {run phase clock until $\maxPhase=\maxPhaseValue$ is reached}
        \If {$\rec.\isLeader$ and ($\rec.\phase = \sen.\phase$) and ($\rec.\phase < \maxPhase$)}
            \State {$\rec.\phase \gets \rec.\phase + 1$}
        \EndIf
        \If {(not $\rec.\isLeader$) and ($\rec.\phase < \sen.\phase$)}
            \State {$\rec.\phase \gets \sen.\phase$}
        \EndIf
        \State{$\textrm{newCount} \gets \floor{\rec.\M / \rec.\ave + 1/2}$ } \Comment{$\M/\ave$ rounded to the nearest integer}
        \LeftComment{only write output if timer is done and new count is different} 
        \If {($\rec.\phase = \maxPhase$) and ($\rec.\thecount \neq \textrm{newCount}$) and ($\M \geq 3 \cdot \textrm{newCount}^3$)}
            \State {$\rec.\thecount \gets \textrm{newCount}$}
        \EndIf
	\end{algorithmic}
\end{algorithm}

We deal with this problem in the following way.
When the leader restarts \averaging,
it simultaneously restarts \timer,
which is a \emph{phase clock} as described by Angluin et al.~\cite{AAE08}.
\timer\ is so named because we can find $\beta_l < \beta_u$ and $\maxPhase$
such that \maxPhase\ phases of the phase clock will take time between
$\beta_l \ln n$ and $\beta_u \ln n$ with high probability.
So long as $\beta_l \ln n$ is greater than a high-probability upper bound on the running time of \averaging,
the timer likely will not go off (reach the final phase \maxPhase)
until \averaging\ has converged.
It is only once \timer\ has reached phase \maxPhase\ that \thecount\ is written,
and then only if the new calculated size differs from the previous value in \thecount.

There is one additional check done before writing to \thecount:
if newCount $= \floor{\frac{\M}{\ave} + \frac{1}{2}}$,
we must have $\M \geq 3 \cdot \textrm{newCount}^3$ in order to write to \thecount.
In particular,
if $\M \geq \lowerBoundOnM$,
then newCount cannot be $n$ unless $\M \geq 3 \cdot \textrm{newCount}^3$.
This is an optimization to save space.
\averaging\ is only guaranteed to get the correct size $n$ efficiently if $\M \geq \lowerBoundOnM$.
However, when \ave\ is small before convergence
(e.g., 1)
then $\floor{\frac{\M}{\ave} + \frac{1}{2}}$ can be as large as $\M$,
requiring
$\Mconstant \log n$ bits.
But if $\floor{\frac{\M}{\ave} + \frac{1}{2}} = n$
(i.e., is correct)
then this value requires at most
$\log n$ bits.
Since $\M$ could be as large as $3 \cdot n^{\Mconstant}$,
requiring $O(1) + \Mconstant \log n$ bits,
this implies $\thecount$ could be as large as $n^{\Cconstant}$,
requiring $\Cconstant \log n$ bits.

\subsection{\exactCounting\ is fast and correct with high probability}

The following is adapted from~\cite[Corollary 1]{AAE08}.
It relates the number of phases in a phase clock
to upper and lower bounds on the likely time spent getting to that phase.
\opt{sub}{A proof is given in the full version of this paper.}
Our proof appeals entirely to Corollary 1 of \cite{AAE08}
but, unlike \cite{AAE08},
the exact relationship between the constants is given in the lemma statement.


\newcommand{\maxPhaseLemma}{p}

\begin{lemma}[\cite{AAE08}]\label{lem:fastcomputation}
    Let $\beta_l,\epsilon_l,\epsilon_u > 0$,
    and define
    $\maxPhaseLemma =  \max(8 \epsilon_l, 32 \beta_l)$
    and
    $\beta_u = 4 \maxPhaseLemma (\epsilon_u+2)$.
    Let $T_\maxPhaseLemma$ be the time needed for a phase clock with $\geq \maxPhaseLemma$ phases to reach phase $p$.
    Then for all sufficiently large $n$,
    $\Pr[T_\maxPhaseLemma < \beta_l \ln n] < \dfrac{1}{n^{\epsilon_l}}$
    and
    $\Pr[T_\maxPhaseLemma > \beta_u \ln n] < \dfrac{1}{n^{\epsilon_u}}$.
\end{lemma}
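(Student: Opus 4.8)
The plan is to derive both inequalities by a direct appeal to Corollary~1 of~\cite{AAE08}, so that the only real work is translating that result's parameters into the names $\beta_l,\beta_u,\epsilon_l,\epsilon_u$ and $\maxPhaseLemma$ used here. Corollary~1 of~\cite{AAE08} analyzes precisely the phase clock driving our \timer\ subprotocol: a unique leader advances its phase only upon meeting an agent at the same phase, and the current phase is then carried to the remaining agents by epidemic. It provides two-sided concentration for the first time the clock reaches a prescribed phase $\maxPhaseLemma$: the clock reaches phase $\maxPhaseLemma$ neither much before time $\Theta(\maxPhaseLemma)\ln n$ nor much after, and the failure probability on each side decays polynomially in $n$, at a rate that can be made as large as desired by taking $\maxPhaseLemma$ large enough (for the lower-tail side) or by padding the time threshold (for the upper-tail side).

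\textbf{Extracting an explicit form.}
After tracking the constants in the Chernoff estimates underlying that corollary (the same bookkeeping exercise we carried out for Lemma~\ref{lem:epidemic}), it yields: for a phase clock with $\maxPhaseLemma$ phases and first hitting time $T_\maxPhaseLemma$ of phase $\maxPhaseLemma$, and for all sufficiently large $n$,
\textbf{(i)} for every exponent $\epsilon_l$ with $8\epsilon_l \le \maxPhaseLemma$, $\Pr\!\left[T_\maxPhaseLemma < (\maxPhaseLemma/32)\ln n\right] < n^{-\epsilon_l}$; and
\textbf{(ii)} for every exponent $\epsilon_u>0$, $\Pr\!\left[T_\maxPhaseLemma > 4\maxPhaseLemma(\epsilon_u+2)\ln n\right] < n^{-\epsilon_u}$.
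Bound (i) is the statement that the clock cannot race ahead: between two consecutive leader advances the just-updated phase must spread to enough agents for the leader to see it again, and performing this $\maxPhaseLemma$ times in too little time forces an improbably fast epidemic; the constraint $8\epsilon_l \le \maxPhaseLemma$ is exactly how many phases are needed for the union bound over these events to beat $n^{-\epsilon_l}$. Bound (ii) is the complementary ``cannot lag'' estimate: each phase completes within $O(\ln n)$ time except with polynomially small probability, and a union bound over the $\maxPhaseLemma$ phases closes it out, the factor $4\maxPhaseLemma(\epsilon_u+2)$ being the resulting worst-case time.

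\textbf{Matching parameters.}
Given $\beta_l,\epsilon_l,\epsilon_u$, set $\maxPhaseLemma = \max(8\epsilon_l,32\beta_l)$ as in the statement. Then $\maxPhaseLemma \ge 8\epsilon_l$, so (i) applies with this $\epsilon_l$; and $\maxPhaseLemma \ge 32\beta_l$ gives $\beta_l\ln n \le (\maxPhaseLemma/32)\ln n$, so by monotonicity of the event in its time threshold,
\[
    \Pr\!\left[T_\maxPhaseLemma < \beta_l\ln n\right]\;\le\;\Pr\!\left[T_\maxPhaseLemma < (\maxPhaseLemma/32)\ln n\right]\;<\;\frac{1}{n^{\epsilon_l}}.
\]
For the other inequality, set $\beta_u = 4\maxPhaseLemma(\epsilon_u+2)$, which is exactly the value in the statement; then (ii) reads directly as $\Pr\!\left[T_\maxPhaseLemma > \beta_u\ln n\right] < n^{-\epsilon_u}$. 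The ``for all sufficiently large $n$'' hypothesis is inherited unchanged from Corollary~1 of~\cite{AAE08}.

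\textbf{Main obstacle.}
The only delicate point is that Corollary~1 of~\cite{AAE08} is phrased asymptotically, whereas Lemma~\ref{lem:fastcomputation} commits to the explicit constants $8$, $32$, $4$ and the additive $+2$. Pinning these down requires re-inspecting the phase-clock proof in~\cite{AAE08} and keeping the constants through its Chernoff and coupon-collector steps, exactly as we did for Lemma~\ref{lem:epidemic}. I expect no new idea is needed here, only this accounting; once it is in place, the substitution above is immediate.
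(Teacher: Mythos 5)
Your proposal is correct and follows essentially the same route as the paper: the lower tail is obtained exactly as in the paper by instantiating Corollary~1 of~\cite{AAE08} (with $a=1/16$, $k=p=\max(8\epsilon_l,32\beta_l)$, $c=\epsilon_l$, $d=\beta_l$), and your upper-tail argument—each phase completes within $\alpha_u\ln n$ time except with probability polynomially small, union bound over the $p$ phases, with $\alpha_u=4(\epsilon_u+2)$ giving $\beta_u=4p(\epsilon_u+2)$—is precisely the computation the paper carries out. The only cosmetic difference is attribution: the paper derives the upper tail from its own constant-explicit epidemic bound (Lemma~\ref{lem:epidemic}) rather than from Corollary~1 itself, but the accounting you describe is the same.
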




\opt{full}{
    \begin{proof}
        Based on~\cite[Corollary 1]{AAE08},
        setting
        (variables of \cite[Corollary 1]{AAE08} on the left,
        and our variables on the right)
        $c = \epsilon_l, d = \beta_l, k = \maxPhaseLemma$, and $a=1/16$,
        then by choosing
        $\maxPhaseLemma =  \max(8 \epsilon_l, 32 \cdot \beta_l)$,
        we have $\Pr[T_\maxPhaseLemma < \beta_l \ln n] < \dfrac{1}{n^{\epsilon_l}}$.
        By Lemma~\ref{lem:epidemic},
        for all $\alpha_u > 0$,
        the epidemic corresponding to each phase $i$ will complete
        (all agents reach phase $i$) in time
        $> \alpha_u \ln n$
        with probability $< 4n^{- \alpha_u / 4 + 1}$.
        Since the time to complete the epidemic is an upper bound
        on the time for the leader to interact with an agent in phase $i$
        (which could happen before the epidemic completes),
        we also have that the phase takes time
        $> \alpha_u \ln n$
        with probability $< 4n^{- \alpha_u / 4 + 1}$.
        By the union bound over all $p$ phases,
        there exists a phase $1 \leq i \leq p$
        taking time
        $> \alpha_u \ln n$
        with probability $< 4 \maxPhaseLemma n^{- \alpha_u / 4 + 1}$.
        Since at least one phase must exceed time
        $\alpha_u \ln n$
        for the sum to exceed
        $\maxPhaseLemma \alpha_u \ln n$,
        $\Pr[T_\maxPhaseLemma >  \maxPhaseLemma \alpha_u \ln n]
        < 4 \maxPhaseLemma n^{-\alpha_u/4+1}$.
        Let $\alpha_u = 4(\epsilon_u+2)$.
        Substituting $\beta_u = \maxPhaseLemma \alpha_u = 4 \maxPhaseLemma (\epsilon_u+2)$
        gives
        $\Pr[T_\maxPhaseLemma >  \beta_u \ln n]
        < 4 \maxPhaseLemma n^{-(4(\epsilon_u+2))/4+1}
        = 4 \maxPhaseLemma n^{-\epsilon_u-1}
        < n^{-\epsilon_u}$,
        which completes the proof.
    \end{proof}
}



The next lemma says that \averaging\ and \timer\ ``happen the way we expect'':
first \averaging\ converges,
before \timer\ ends and records the output of \averaging,
all in $O(\log n)$ time.
When we say ``\averaging\ converges'',
this refers to the \averaging\ protocol running in isolation,
not as part of a larger protocol that might restart it.
That is to say,
it may be that \averaging\ converges,
but \exactCounting\ has not converged,
since \exactCounting\ then restarts \averaging\ and
subsequently changes the \thecount\ field.
Intuitively,
it follows by a simply union bound on the probability that \averaging\ is too slow
(Corollary~\ref{cor:ave-close-prob-one-over-n})
or \timer\ is too fast
(Lemma~\ref{lem:fastcomputation}).
\opt{sub}{A proof is given in the full version of this paper.}

\begin{lemma}\label{lem:averagingconverge}
    For any level $\geq \log n$,
    if it takes $\geq \betaUValue \ln n$ time
    to start the next level,
    with probability $\geq 1-\frac{3}{n}$,
    first \averaging\ converges to the correct output,
    then \timer\ ends and writes $n$ into \thecount,
    in $\leq \betaUValue \ln n$ time.
\end{lemma}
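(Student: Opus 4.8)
The plan is to condition on the standard high-probability facts already available — that at every level $\ge\log n$ there is a unique leader (Lemma~\ref{lem:uniqueleader}) and that the code length never exceeds $6\log n$, so the value $\M=3\cdot 2^{3|\C|}$ stored after a restart satisfies $\lowerBoundOnM\le\M\le 3n^{\Mconstant}$ (the lower bound since $|\C|\ge\log n$ at such levels, the upper bound from Corollary~\ref{cor:all_length} with $\epsilon=1$) — and then to prove the lemma by a union bound over three ``bad'' events: \averaging\ failing to converge by time $\aveTimeConstant\ln n$, \timer\ reaching phase \maxPhase\ before time $\aveTimeConstant\ln n$, and \timer\ failing to reach phase \maxPhase\ by time $\betaUValue\ln n$. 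First I would pin down the configuration when level $k\ge\log n$ is reached: the unique leader's call to \setNewLeaderCode\ sets $\M=3\cdot 2^{3k}$, its $\ave=\M$, its \phase\ to $1$, and within $O(\log n)$ time (an epidemic inside \electLeader) every other agent installs the length-$2k$ leader code and thereby resets its own \ave\ to $0$, \phase\ to $1$, and $\M$ to $3\cdot 2^{3k}$. Since \averaging\ executes only between agents with equal \LC, no averaging step touches an agent until it has installed the new leader code and reset \ave\ to $0$, so the clean initial configuration (leader with $\ave=\M$, everyone else with $\ave=0$, a common $\M$, population-wide \ave-sum $=\M$) is in force before any relevant averaging happens. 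The hypothesis that the next level takes $\ge\betaUValue\ln n$ time guarantees no further restart disturbs \ave\ or \phase\ inside the window analyzed.

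Next I would bound the first bad event via Corollary~\ref{cor:ave-close-prob-one-over-n} applied with $t=\aveTimeConstant\ln n$: because $\M\le 3n^{\Mconstant}$ we have $\ln(4\M^2)\le\ln 36+36\ln n\le\aveTimeConstant\ln n$ for $n\ge 36$, so with probability $\ge 1-\frac{1}{2n}$ every agent's \ave\ lies in $\bigl[\frac{\M}{n}-n,\frac{\M}{n}+n\bigr]$ by time $\aveTimeConstant\ln n$, and since a single averaging step leaves the population minimum nondecreasing and the maximum nonincreasing, every \ave\ stays in that interval afterward; by Lemma~\ref{lem:round} with $c=1$ (using $\M\ge 3n^3$), each agent then computes $\floor{\M/\ave+\frac{1}{2}}=n$, i.e.\ \averaging\ has converged to the correct value. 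For the two \timer\ events I would invoke Lemma~\ref{lem:fastcomputation} twice with the constants already fixed in the protocol: taking $\beta_l=\aveTimeConstant$ and some $\epsilon_l\in(0,148]$ gives $\maxPhaseLemma=\max(8\epsilon_l,32\beta_l)=\maxPhaseValue$ (as $32\beta_l=\maxPhaseValue$ and $8\epsilon_l\le\maxPhaseValue$), so the phase clock indeed has $\ge\maxPhaseLemma$ phases, and $\Pr[T_\maxPhaseLemma<\aveTimeConstant\ln n]\le\frac{1}{n}$, so \timer\ does not end before time $\aveTimeConstant\ln n$, hence not before \averaging\ converges; taking $\epsilon_u=1$ gives $\beta_u=4\maxPhaseLemma(\epsilon_u+2)=12\cdot\maxPhaseValue=\betaUValue$ and $\Pr[T_\maxPhaseLemma>\betaUValue\ln n]\le\frac{1}{n}$, so every agent reaches phase \maxPhase\ within the window.

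To finish, I would observe that any agent that reaches phase \maxPhase\ does so after time $\aveTimeConstant\ln n$, at which point \averaging\ has converged, so the quantity \timer\ computes satisfies $\mathrm{newCount}=\floor{\M/\ave+\frac{1}{2}}=n$, and $\M\ge 3n^3=3\cdot\mathrm{newCount}^3$, so the guard in \timer\ passes and the agent writes $n$ into \thecount\ (or already holds $n$ there); as \ave\ stays in range, \thecount\ remains $n$ for the rest of the window. A union bound over the three bad events (probabilities $\le\frac{1}{2n},\frac{1}{n},\frac{1}{n}$) then gives probability $\ge 1-\frac{3}{n}$ that first \averaging\ converges to $n$, then \timer\ ends and records $n$ in \thecount, all inside time $\betaUValue\ln n$. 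The step I expect to be the main obstacle is the constant bookkeeping — verifying that one value $\maxPhase=\maxPhaseValue$ simultaneously makes the ``not too early'' bound compatible with the same $\beta_l=\aveTimeConstant$ that dominates $\ln(4\M^2)/\ln n$ (which in turn forces the upper bound $\M\le 3n^{\Mconstant}$ via Corollary~\ref{cor:all_length}) and makes the ``not too late'' bound come out to exactly $\beta_u=\betaUValue$ — together with the subtler-than-it-looks verification that the $O(\log n)$-time propagation of the restart cannot let a stale \ave\ or \phase\ pollute the averaging before the clean initial configuration is established, which is exactly what the ``only average between equal leader codes'' test in \exactCounting\ is there to prevent.
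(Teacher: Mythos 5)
Your proposal is correct and follows essentially the same route as the paper's proof: condition on the unique leader and the $6\log n$ code-length bound so that $\lowerBoundOnM \le \M \le 3n^{18}$, apply Corollary~\ref{cor:ave-close-prob-one-over-n} plus Lemma~\ref{lem:round} for \averaging, and apply Lemma~\ref{lem:fastcomputation} twice (with $\beta_l=\aveTimeConstant$, $p=\maxPhaseValue$, $\epsilon_u=1$, $\beta_u=\betaUValue$) for the two \timer\ bounds, finishing with a union bound. If anything, your constant bookkeeping ($\ln(4\M^2)\le 37\ln n$, the $\M\ge 3\cdot\mathrm{newCount}^3$ guard) and the discussion of the clean restart via the equal-$\LC$ test are more explicit than the paper's own argument.
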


\opt{full}{
    \begin{proof}
        Corollary~\ref{cor:all_length} applied with $\epsilon=1$
        shows that agents codes' length are $\leq 6 \log n$
        with probability $\geq 1-\frac{1}{n}$.
        Lemma~\ref{lem:uniqueleader} shows that after level $\ell$ the leader is unique with probability $\geq 1-\frac{1}{n}$.
        Since $\ell \geq \log n$ the value of $M$ is will be $\geq \lowerBoundOnM$.
        By Lemma~\ref{lem:round}, if \averaging\ converges,
        then $\floor{\frac{\M}{\ave} + \frac{1}{2}}$ exactly $n$.

        By the union bound on Lemma~\ref{lem:fastcomputation} and Corollary~\ref{cor:ave-close-prob-one-over-n},
        with probability $\leq \frac{1}{n} + \frac{1}{n}$ the timer takes more than $\betaUValue \ln n$ time,
        or
        \averaging\ takes more than $\aveTimeConstant \ln n$ time to converge.
        Negating these conditions gives conclusion of the lemma.

        To show that \timer\ does not end until \averaging\ converges,
        we apply Lemma~\ref{lem:fastcomputation} again,
        but using the time lower bound for \timer.
        Letting $\epsilon_l = 1$
        and $\beta_l = \aveTimeConstant$,
        Lemma~\ref{lem:fastcomputation} gives that for
        $p = 32 \beta_l = \maxPhaseValue$,
        with probability $\geq 1 - \frac{1}{n}$,
        \timer\ does not end before $\beta_l \ln n$ time.
        Applying the union bound to this case and the previous two cases then gives probability
        $1 - \frac{1}{n}$ as desired.
    \end{proof}
}

Finally,
we can prove the ``with high probability'' portion of the main theorem.

\begin{theorem} \label{thm:main:whp}
    With probability at least $1-\exactCountingConvergenceTimeProbErr$,
    \exactCounting\
    converges to the correct output
    within $\exactCountingConvergenceTimeConstant \ln n \log \log n$ time,
    and each agent uses at most
    $\stateCountConstantLog + \stateCountExponent \log n$
    bits.
\end{theorem}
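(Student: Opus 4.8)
The plan is to compose the guarantees already proved for the four subprotocols and finish with a union bound over a small collection of failure events. First, by Corollary~\ref{cor:quick} with $\alpha=1$, within $5\ln n\log\log n$ time every agent reaches level at least $\log n$, except with probability $\frac{5\log\log n}{n}$. Since the level only ever doubles and starts below $\log n$, the first level that is $\ge\log n$ lies in $[\log n,2\log n)$. From that point on, two further facts hold with high probability: by Lemma~\ref{lem:uniqueleader} the leader is unique (failure probability $\le\frac1n$), so after $O(\ln n)$ more time its leader code has spread by epidemic and the downstream subprotocols run with exactly one leader; and by Corollary~\ref{cor:all_length} with $\epsilon=1$, no agent's code length ever exceeds $\Cconstant\log n$ (failure probability $\le\frac1n$). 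Combined with the level being in $[\log n,2\log n)$, the latter caps the number of further level increases by two, hence the number of further restarts of \averaging\ and \timer\ by a constant, all occurring at levels $k\ge\log n$ where $\M=3\cdot 2^{3k}$ satisfies $\lowerBoundOnM\le\M<3n^{\Mconstant}$.

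Next I would analyze \averaging\ and \timer\ within each ``epoch'' --- a maximal interval during which the level, hence $\M$, is fixed --- occurring at a level $\ge\log n$. By Corollary~\ref{cor:ave-close-prob-one-over-n} together with Lemma~\ref{lem:round} applied with $c=1$ (valid since $\M\ge\lowerBoundOnM$), \averaging\ converges within $O(\ln n)$ time to all \ave\ values within $n$ of $\frac{\M}{n}$, hence to output $\floor{\M/\ave+1/2}=n$, except with probability $\le\frac1{2n}$; and by the time-lower-bound half of Lemma~\ref{lem:fastcomputation} (as in the final paragraph of the proof of Lemma~\ref{lem:averagingconverge}), \timer\ does not reach phase \maxPhase\ before \averaging\ has converged, except with probability $\le\frac1n$. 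So in every such epoch the only value \timer\ ever writes to \thecount\ is $n$, and once the first level-$\ge\log n$ epoch begins, \thecount\ can never be overwritten by an incorrect value. Lemma~\ref{lem:averagingconverge} packages this for an epoch lasting $\ge\betaUValue\ln n$, showing \thecount\ is set to $n$ within $\betaUValue\ln n$ of such an epoch's start (except with probability $\le\frac3n$); since the final epoch lasts forever at a level $\ge\log n$ and at most a constant number of epochs precede it there, \thecount\ equals $n$ for good within a further $O(\ln n)$ time after the first level-$\ge\log n$ epoch begins. For all sufficiently large $n$ that overhead fits inside the gap between $5\ln n\log\log n$ and $\exactCountingConvergenceTimeConstant\ln n\log\log n$, giving the claimed convergence time.

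For the memory bound, every field's length follows from $|\C|<\Cconstant\log n$: $|\LC|=2|\C|<\LCconstant\log n$; $\M=3\cdot 2^{3|\C|}<3n^{\Mconstant}$ so $|\M|\le 2+\Mconstant\log n$; $\ave\le\M$ (the population-wide sum of \ave\ is the constant $\M$) so $|\ave|\le 2+\aveconstant\log n$; the \timer\ guard $\M\ge 3\cdot\textrm{newCount}^3$ forces $\thecount<n^{\countconstant}$; and $\phase\le\maxPhaseValue$ and $\isLeader$ take $O(1)$ bits. Summing these, as tabulated in Section~\ref{sec:model}, gives at most $\stateCountConstantLog+\stateCountExponent\log n$ bits per agent. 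Adding the failure probabilities --- $\frac{5\log\log n}{n}$ from Corollary~\ref{cor:quick}, $\frac1n$ each from Lemma~\ref{lem:uniqueleader} and Corollary~\ref{cor:all_length}, and an $O(1/n)$ contribution from the constantly many invocations of Lemma~\ref{lem:averagingconverge}/Corollary~\ref{cor:ave-close-prob-one-over-n}/Lemma~\ref{lem:fastcomputation} --- and bounding the constants gives total failure probability at most $\exactCountingConvergenceTimeProbErr$.

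I expect the no-overwrite argument to be the main obstacle. Because \uniqueID\ --- and therefore \averaging\ and \timer\ --- keep restarting for $\Omega(n)$ time, a naive union bound over all restarts is hopeless. The delicate point is that the code-length bound caps the number of restarts at level $\ge\log n$ by a constant; that at those levels $\M$ is simultaneously large enough for \averaging\ to be correct (Lemma~\ref{lem:round}) and to satisfy the \timer\ guard when the count is $n$; while at every level $<\log n$ the value $\M=3\cdot 2^{3|\C|}<3n^3$ is too small for the guard to permit writing $n$, so any (possibly incorrect) write there is harmless, being superseded by the later correct write. A secondary nuisance is making the time constant come out exactly as $\exactCountingConvergenceTimeConstant$ rather than merely $O(1)$, which I would handle by the ``for all sufficiently large $n$'' escape already used in Lemma~\ref{lem:fastcomputation}.
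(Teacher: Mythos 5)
Your proposal is correct and follows essentially the same route as the paper's own proof: reach a level $k\in[\log n,2\log n)$ via Corollary~\ref{cor:quick}, get a unique leader via Lemma~\ref{lem:uniqueleader}, cap the post-$\log n$ levels at $k,2k,4k$ via Lemma~\ref{lem:all_length}/Corollary~\ref{cor:all_length}, then in each such level apply the \averaging-converges-before-\timer-fires argument (Lemma~\ref{lem:averagingconverge} together with the \timer\ lower bound from Lemma~\ref{lem:fastcomputation}) so that the correct value $n$ is written within $O(\ln n)$ of some level's start and never overwritten, finishing with the same union bound and the same code-length-based memory accounting. The only differences are organizational (your ``epoch'' framing versus the paper's explicit disjoint cases over which of levels $k,2k,4k$ first lasts longer than $t$, and a single application of Corollary~\ref{cor:all_length} with $\epsilon=1$ where the paper splits cases), which do not change the substance.
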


\opt{sub}{A formal proof is given in the full version of this paper.}

\begin{proof}[Proof sketch.]
    We sketch the ideas while omitting exact bounds on time and probability.
    Statements below are ``with high probability''.
    Define $k$ to be the unique power of two such that $\log n \leq k < 2 \log n$.
    By Corollary~\ref{cor:quick},
    level $k$ is reached quickly,
    so it suffices to prove fast convergence after the event that $k$ is reached.
    By Lemma~\ref{lem:uniqueleader},
    the leader is unique at level $k$,
    therefore also $2k$ and $4k$,
    and since $k \geq \log n$,
    $\M \geq \lowerBoundOnM$.
    So by Lemma~\ref{lem:averagingconverge},
    \averaging\ will converge within time
    $t = \betaUValue \ln n$.

    We look at three subcases:
    among levels $k, 2k, 4k$,
    one is the earliest among the three where
    $>t$
    time is spent.
    By Lemma~\ref{lem:all_length},
    level $4k$ is not exceeded since codes are unique.
    Since codes are unique at level $4k$,
    at $>t$ (in fact, infinite time)
    will be spent at level $4k$.
    But it could be that the protocol also spends time
    $>t$
    at level $k$ or $2k$.
    Whichever is the first among these three to spend time
    $>t$,
    since the previous spent less time,
    it takes time
    $\leq 2t$
    to reach the first level taking time
    $>t$.
    By Lemma~\ref{lem:averagingconverge},
    \averaging\ converges in time
    $\leq t$,
    and
    by the time \emph{upper} bound of Lemma~\ref{lem:fastcomputation},
    \timer\ reaches phase \maxPhase\
    in time
    $\leq t$
    and records the output of \averaging.

    If we are at level $k$ or $2k$,
    then we might go to a new level.
    If this is guaranteed to happen within $O(\log n)$ time,
    then we would not need the \timer\ protocol.
    We could simply claim that \averaging\
    will converge at the last level reached,
    whether $k$, $2k$, or $4k$.
    The problem that \timer\ solves is that
    \exactCounting\ may reach level $k$ quickly,
    \averaging\ converges quickly,
    yet a small number of duplicate nodes remain,
    say 2.
    It takes $\Omega(n)$ time for them to interact and increase to level $2k$,
    which restarts \averaging.
    By the time \emph{lower} bound of Lemma~\ref{lem:fastcomputation},
    in each of these restarts
    \timer\ will not reach phase \maxPhase\ until \averaging\ reconverges,
    so the \thecount\ field will not be overwritten.
    Thus convergence happened at the \emph{first} level where we spent time
    $>t$,
    even if there are subsequent restarts.
\end{proof}

\opt{full}{

    \begin{proof}
        By Corollary~\ref{cor:quick},
        agents reach level $\log n$ before
        $5 \ln n\log\log n$ time
        with probability at least $1-\frac{5\log\log n}{n}$.
        By Lemma~\ref{lem:uniqueleader},
        the leader is unique at any such level,
        with probability at least $1-\frac{1}{n}$.
        Also $\M = 3 \cdot 2^{3|C|} \geq 3 \cdot 2^{3 \log n} = 3n^2$.
        Hence, if we could stop \uniqueID\ at this point,
        then based on Lemma~\ref{lem:averagingconverge},
        \averaging\ would converge
        in at most $\betaUValue \ln n$ time
        with probability at least $1-\frac{3}{n}$.
        However, there may be duplicate codes,
        so \uniqueID\ possibly continues,
        restarting \averaging\ later.

        Let $t = \betaUValue \ln n$.
        For any $\ell$, define $\round{\ell}$ to be the event that
        \exactCounting\ spends more than time $t$ at level $\ell$.
        Define $k$ to be the unique power of two such that $\log n \leq k < 2\log n$.
        We consider the following disjoint cases that cover all possible outcomes:

        \begin{description}
        \item[\underline{$\round{k}$}:]
            Lemma~\ref{lem:averagingconverge} shows that
            with probability at least $1-\frac{3}{n}$
            \averaging\ converges to the correct output and this output is recorded.
            By the time \emph{upper} bound of Lemma~\ref{lem:fastcomputation},
            with probability at least
            \timer\ reaches phase \maxPhase\
            in time
            $\leq t$
            and records the output of \averaging.
            It remains to show that convergence is likely;
            i.e., this correct value will not be overwritten.
            Again using Lemma~\ref{lem:averagingconverge},
            with probability at most $\frac{3}{n}$,
            at level $2k$ \timer\ ends before \averaging,
            and similarly for error probability $\frac{3}{n}$ at level $4k$.
            By the union bound over these three subcases,
            in this case,
            with probability
            $\geq 1 - \frac{3+3+3}{n} = 1 - \frac{9}{n}$,
            \exactCounting\ converges in time
            $< t$.

        \item[\underline{(not $\round{k}$) and $\round{2k}$}:]
            Similar to above,
            we apply Lemma~\ref{lem:averagingconverge} to level $2k$ to obtain probability
            $\geq 1 - \frac{3}{n}$
            that \averaging\ converges and \timer\ writes $n$ into \thecount\ in time
            $< t$,
            and apply Lemma~\ref{lem:averagingconverge} to level $4k$ to obtain probability
            probability at most $\frac{3}{n}$
            that \timer\ ends too early and disrupts convergence.
            By the union bound over these two subcases,
            in this case,
            with probability
            $\geq 1 - \frac{3+3}{n} = 1 - \frac{6}{n}$,
            \exactCounting\ converges in time
            $< t$.

        \item[\underline{(not $\round{k}$) and (not $\round{2k}$) and $\round{4k}$}:]
            Apply Lemma~\ref{lem:averagingconverge} to level $4k$ to obtain probability
            $\geq 1 - \frac{3}{n}$
            that \averaging\ converges and \timer\ writes $n$ into \thecount\ in time
            $< t$.

        \item[\underline{not $\round{4k}$}:]
            Lemma~\ref{lem:all_length} applied with $\epsilon=2$
            gives probability at most $\frac{1}{n^2}$
            that there are duplicate codes and we reach subsequent levels.
        \end{description}

        Since the four cases are disjoint,
        take the maximum error probability of any of them:
        with probability at least
        $1 - \frac{9}{n}$,
        once at level $k$,
        it takes time $< t$ to converge.

        By the union bound on
        the event that it takes more than time
        $5 \ln n\log\log n$ to reach level $\log n$
        (probability $\leq \frac{5\log\log n}{n}$),
        the event that the leader is not unique
        (probability $\leq \frac{1}{n}$),
        and the event that once at level $k$,
        it takes time $\geq t$ to converge
        (probability $\leq \frac{9}{n}$),
        we obtain that with probability at least
        $1 - \frac{10 + 5 \log \log n}{n}$,
        \exactCounting\ converges in time
        $< 6 \ln n \log \log n$.

        \todo{DD: pull this out into a separate lemma eventually}
        We now prove the memory requirements.
        First, if the maximum level reached is $\ell$,
        then the memory requirements are
        \todo{DD: these are explained a bit elsewhere, but we should re-explain them here.}
        $\ell$ for \C,
        $2 \ell$ for \LC,
        $2 + 3 \ell$ for \M,
        $2 + 3 \ell$ for \ave,
        $\ell$ for \thecount,
        1 for \isLeader,
        and for $\phase$,
        $\log \maxPhase = \log \maxPhaseValue < 12$,
        summing to
        $17 + 10 \ell$.

        There are two disjoint cases:
        $2k \geq 3 \log n$
        and
        $2k < 3 \log n$.
        In the former case,
        applying Lemma~\ref{lem:all_length} with $\epsilon=1$
        gives probability at most $\frac{1}{n}$ of a duplicate code.
        In the latter case,
        $4 \log n \leq 4k < 6 \log n$,
        and applying Lemma~\ref{lem:all_length} with $\epsilon=2$
        gives probability at most $\frac{1}{n^2}$ of a duplicate code.
        In either case the level is less than $\Cconstant \log n$,
        so we set $\ell = \Cconstant \log n$.
        The sum of the bit requirements is then
        $\stateCountConstantLog + \stateCountExponent \log n$ as needed.
        Since the cases are disjoint,
        we take the maximum error probability $\frac{1}{n}$.

        Taking a union bound between this event of ``too much memory''
        and the previous event of ``too much time'',
        the total error probability bound is
        $\exactCountingConvergenceTimeProbErr$
        as required.
        \end{proof}

}

\subsection{\exactCounting\ converges in fast expected time}
\label{subsec:expected-time}

\newcommand{\thmMainExp}{
    The next theorem
    shows a fast expected convergence time,
    and it completes the second portion of the main result,
    Theorem~\ref{thm:main}.
    \opt{sub}{A proof is in the full version of this paper.}

    \begin{theorem} \label{thm:main:exp}
        \exactCounting\ converges in expected time $\expectedTimeConstant \ln n \log \log n$.
    \end{theorem}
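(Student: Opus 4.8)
The plan is to bound the expected convergence time by splitting into a "good" event---already analyzed in Theorem~\ref{thm:main:whp}---and a "bad" tail event, and showing that even in the bad case the expected additional time is only $O(\ln n \log \log n)$, so that the overall expectation is dominated by the with-high-probability bound. Concretely, let $G$ be the event from Theorem~\ref{thm:main:whp} that \exactCounting\ converges within $\exactCountingConvergenceTimeConstant \ln n \log\log n$ time; we have $\Pr[G] \geq 1 - \exactCountingConvergenceTimeProbErr = 1 - \frac{10 + 5\log\log n}{n}$. Conditioned on $G$, the convergence time is at most $\exactCountingConvergenceTimeConstant \ln n \log\log n$. Conditioned on $\overline{G}$, which has probability $O(\frac{\log\log n}{n})$, I would argue the conditional expected convergence time is $O(n \log n)$ (a crude worst-case bound: once \uniqueID\ has driven all codes to be unique---which happens in $O(n)$ expected time since there are at most $n$ equivalence classes of codes and each "merge" that increases the level count happens in expected $O(1)$ parallel time per pair, or more simply $O(n \log n)$ via a coupon-collector-style argument---the leader is likely unique, $\M$ is large enough, and then \averaging\ converges in $O(\log n)$ expected time and \timer\ fires in $O(\log n)$ expected time, after which \thecount\ holds $n$ permanently). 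Then
\[
\E[\text{time}] \leq \Pr[G]\cdot \exactCountingConvergenceTimeConstant \ln n \log\log n + \Pr[\overline{G}]\cdot O(n\log n) \leq \exactCountingConvergenceTimeConstant \ln n \log\log n + O\!\left(\tfrac{\log\log n}{n}\right)\cdot O(n\log n),
\]
and the second term is $O(\log n \log\log n)$, so the total is $O(\ln n \log\log n)$; with the constants tracked carefully this gives $\expectedTimeConstant \ln n \log\log n = (\exactCountingConvergenceTimeConstant+1)\ln n \log\log n$.

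The cleanest way to get a clean constant is probably to iterate: rather than a single crude $O(n\log n)$ bound on $\overline G$, partition time into epochs of length $\exactCountingConvergenceTimeConstant \ln n \log\log n$ and show that, regardless of the current configuration, within each epoch the protocol converges with probability $\geq 1 - \exactCountingConvergenceTimeProbErr$ (this requires a "memoryless restart" observation: since \uniqueID\ only ever increases code lengths and since the analysis of Theorem~\ref{thm:main:whp} only used properties that hold once level $\log n$ is reached, the argument can be restarted from any reachable configuration in which the level is still $< \log n$, and once the level is $\geq \log n$ the with-high-probability argument applies directly). Then the number of epochs needed is stochastically dominated by a geometric random variable with success probability $\geq 1 - \exactCountingConvergenceTimeProbErr$, whose expectation is $\frac{1}{1 - \exactCountingConvergenceTimeProbErr} = 1 + O(\frac{\log\log n}{n})$, giving expected time $\leq \left(1 + O(\frac{\log\log n}{n})\right)\exactCountingConvergenceTimeConstant \ln n \log\log n \leq (\exactCountingConvergenceTimeConstant + 1)\ln n \log\log n = \expectedTimeConstant \ln n \log\log n$ for large $n$.

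The main obstacle I expect is justifying the "memoryless restart" step rigorously: one must check that the hypotheses used in Corollary~\ref{cor:quick}, Lemma~\ref{lem:uniqueleader}, Lemma~\ref{lem:all_length}, and Lemma~\ref{lem:averagingconverge} really do hold from an arbitrary reachable configuration and not just from the designated initial configuration---in particular that the code-length-doubling schedule still guarantees reaching level $\geq \log n$ within $O(\ln n \log\log n)$ additional time no matter what the current (partial) level structure is, and that a unique leader, once it exists, persists and keeps $\M \geq \lowerBoundOnM$ through all subsequent restarts. A secondary subtlety is that "convergence" means the \thecount\ field is written to $n$ \emph{for the last time}; I must ensure that in the bad-tail analysis, once \timer\ fires at a level $\geq \log n$ with a unique leader, the check $\M \geq 3\cdot\mathrm{newCount}^3$ in \timer\ prevents any later (pre-reconvergence) value of \averaging\ from overwriting the correct count, exactly as in the proof of Theorem~\ref{thm:main:whp}. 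Once these points are in place, the expectation bound follows by the geometric-series computation sketched above.
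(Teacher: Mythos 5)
Your first decomposition is essentially the paper's: condition on the high-probability event of Theorem~\ref{thm:main:whp}, and bound the tail contribution by the expected time for \uniqueID\ to stabilize plus a final $O(\log n)$ for \averaging\ and \timer. But as written there is a genuine quantitative gap. With the tail bounded only by $O(n\log n)$, the term $\Pr[\overline{G}]\cdot O(n\log n) = O\bigl(\tfrac{\log\log n}{n}\bigr)\cdot O(n\log n)$ is $\Theta(\log n \log\log n)$ --- the same order as the main term --- so the claimed constant $\expectedTimeConstant$ does not follow from this computation; you need a tail bound that is \emph{linear} in $n$ with a small explicit constant, so that the product is only $O(\log\log n)$. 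The paper supplies exactly this as Lemma~\ref{lem:unique-ID-converge}: \uniqueID\ stabilizes in expected time roughly $1.03\,n$, proved by showing that each new level costs at most about $n/2 + O(\log n)$ expected time and, via Lemma~\ref{lem:all_length}, that levels beyond $4\log n$ are reached only with polynomially small probability, so the sum over levels truncates. Your parenthetical justification of an $O(n)$ bound --- that each merge ``happens in expected $O(1)$ parallel time per pair'' --- is incorrect: a single surviving duplicate pair needs expected $\Theta(n)$ parallel time to meet; the correct reason the total is linear is that, with high probability, only $O(1)$ such expensive levels occur past level $\log n$. (The paper also proves, via a potential-function argument, that \averaging\ stabilizes with probability 1, which is needed for convergence to be a probability-1 event and hence for the expectation to be finite; your sketch does not address this.)

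Your proposed ``cleaner'' epoch/geometric argument has a more fundamental problem than the memoryless-restart issue you flag. Convergence for \exactCounting\ means the value $n$ is written into \thecount\ for the \emph{last} time; whether an epoch ``succeeded'' is therefore not determined by the interactions inside that epoch, because a later restart of \averaging\ (triggered whenever surviving duplicate codes finally meet, possibly $\Theta(n)$ time later) can overwrite the correct value if that restart's \timer\ fires too early. So the assertion ``from any reachable configuration, the protocol converges within one epoch with probability $\geq 1-\exactCountingConvergenceTimeProbErr$'' is not an event verifiable epoch-by-epoch, and the stochastic domination by a geometric random variable does not go through as stated. The paper's route avoids this by falling back on an event after which no further restarts are possible (stabilization of \uniqueID, followed by one last run of \averaging\ and \timer), which is why its tail bound is unconditional and multiplies cleanly against $\Pr[\overline{G}]$.
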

}

    Most of the technical difficulty of our analysis is captured by the ``with high probability'' results stated already.
    \exactCounting\ is also stabilizing,
    meaning that with probability 1 it gets to a correct configuration that is \emph{stable} (the output cannot change).
    Probability 1 correctness is required for the expected correct convergence time to be finite,
    and indeed it asymptotically matches the high probability convergence time of $O(\log n \log \log n)$.
    However, the protocol takes longer to stabilize,
    up to $O(n)$ time,
    since it does not stabilize until \uniqueID\ stabilizes.\footnote{
        Prior to that,
        it is possible,
        with low probability,
        after $n$ is written into each agent's \thecount\ field,
        for a subsequent restart of \averaging\ to write incorrect values,
        if the corresponding restart of
        \timer\ completes too quickly.
        So no configuration is stable until \uniqueID\ converges and triggers the final restart.
    }

    \thmMainExp

\opt{full}{
    First we establish some other claims necessary to prove Theorem~\ref{thm:main:exp}.
    Recall that a protocol \emph{stabilizes} if it converges to the correct output with probability 1.

    \begin{lemma}
    \exactCounting\ stabilizes to the correct population size.
    \end{lemma}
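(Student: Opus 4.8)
The plan is to exhibit, with probability~$1$, a ``final'' run of the downstream subprotocols that leaves every agent with the value $n$ in \thecount\ and reaches a configuration from which no transition can ever change any agent's output. The argument follows the pipeline \uniqueID\ $\to$ \electLeader\ $\to$ (\averaging, \timer).

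\textbf{Stabilization of \uniqueID.} I would work with the finite-reachable-configuration variant of \uniqueID\ described in Section~\ref{subsec:uniqueID} (adopting it changes no other part of the analysis). There each level event strictly increases the number of code-equivalence classes, so only finitely many level events occur, and with probability~$1$ a configuration is reached after which no code changes. In such a configuration the rule that lengthens a receiver's code to a longer sender's length can never fire, so all codes share one length $\ell$; and the code-doubling rule can never fire, so the $n$ codes are pairwise distinct. Hence $2^{\ell}\ge n$, i.e.\ $\ell\ge\log n$, and every agent has $\M = 3\cdot 2^{3\ell}\ge 3n^{3}\ge\lowerBoundOnM$.

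\textbf{Convergence of \electLeader\ and the last restart.} After \uniqueID\ has stabilized, no leader code lengthens further, so \electLeader\ is just epidemic propagation of the lexicographically largest leader code $L^{*}$, held by a single agent $u$ (unique with probability~$1$, since distinct codes give distinct candidate leader codes, cf.\ Lemma~\ref{lem:uniqueleader}); with probability~$1$ every other agent eventually adopts $L^{*}$ and becomes a follower, and no further restarts of \averaging\ or \timer\ occur after that. Let $T^{*}$ be the (a.s.\ finite) time by which \uniqueID\ has stabilized \emph{and} every agent holds $L^{*}$. The bookkeeping point to verify is that at $T^{*}$ the invariant $\sum_{v}\ave_v=\M$ holds: the set of agents holding $L^{*}$ is created when $u$ performs its last restart with $\ave=\M$, every later agent joins it through \setNewLeaderCode\ with $\ave=0$, no agent ever leaves it (since $L^{*}$ is the lexicographic maximum and codes no longer change), and \averaging\ preserves the sum inside the set; at $T^{*}$ this set is the whole population.

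\textbf{Convergence of \averaging\ and \timer.} After $T^{*}$ the dynamics consist only of \averaging\ and the phase clock \timer, neither ever restarted again, so with probability~$1$: (i) \averaging\ converges --- $\sum_v \ave_v^2$ is a nonincreasing integer that strictly decreases on any interaction of two agents whose \ave-values differ by at least $2$, and such an interaction recurs with probability~$1$ as long as it is enabled, so eventually every $\ave_v\in\{\floor{\M/n},\ceil{\M/n}\}$; and (ii) the phase clock reaches \maxPhase\ at every agent. Once both have happened, any agent acting as receiver has $\rec.\phase=\maxPhase$, $\mathrm{newCount}=\floor{\M/\rec.\ave+1/2}=n$ by Lemma~\ref{lem:round} (applied with $c=0$, using $\M\ge 3n^{2}$), and $\M\ge 3n^{3}=3\cdot\mathrm{newCount}^{3}$, so it sets $\thecount\gets n$; with probability~$1$ every agent is eventually such a receiver, after which the guard $\rec.\thecount\ne\mathrm{newCount}$ is permanently false while $\ave$ and $\phase$ are frozen, so no later transition changes any output. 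Hence with probability~$1$ \exactCounting\ reaches a correct stable configuration, which is the meaning of ``stabilizes to the correct population size''. The step I expect to be most delicate is the middle one: making fully rigorous that \uniqueID\ stabilizes to equal-length distinct codes, and that the asynchronous restarts triggered by \electLeader\ leave the invariant $\sum_v\ave_v=\M$ intact at the moment of the final restart.
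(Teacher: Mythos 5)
Your proposal is correct in substance and follows essentially the same decomposition as the paper's own proof: \uniqueID\ stabilizes with probability 1 with all codes of a common length $\geq \log n$ (hence $\M \geq \lowerBoundOnM$), a unique surviving leader triggers a final restart, \averaging\ stabilizes by a potential-function argument, \timer\ reaches \maxPhase\ with probability 1, and Lemma~\ref{lem:round} (applied with $c=0$) turns the stabilized \ave\ values into the exact count $n$. The one real deviation is your first step: you prove stabilization of \uniqueID\ for the finite-reachable-configuration \emph{variant}, whereas the lemma concerns the protocol as defined, which appends uniformly random bits; your bound of finitely many level events (one per new equivalence class) is specific to the variant and does not transfer verbatim. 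The paper handles the actual protocol directly with the observation that two agents generate the same infinite random bit sequence with probability 0, so with probability 1 all codes eventually become distinct (and then, as you correctly argue, necessarily of one common length $\geq \log n$); you should either add that one-line probabilistic observation or explicitly justify the reduction to the variant. Otherwise the differences are cosmetic or are improvements in care: your potential $\sum_v \ave_v^2$ plays exactly the role of the paper's $\sum_v |\ave_v - \M/n|$ (both are nonincreasing, strictly decrease when two agents whose \ave\ values differ by at least $2$ interact, and such a pair exists whenever some \ave\ lies outside $\{\floor{\M/n},\ceil{\M/n}\}$ because the group sum is $\M$); your explicit verification that $\sum_v \ave_v = \M$ holds within the final $\LC$-group after the last restart, and your check of the guard $\M \geq 3\cdot\textrm{newCount}^3$ before \timer\ writes \thecount, are details the paper merely asserts or leaves implicit. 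Like the paper, you treat the survival of the agent with the maximal leader code in \electLeader\ informally, so no complaint there beyond what applies to the original proof as well.
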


    \begin{proof}
        Since each agent generates code bits uniformly at random,
        any pair of agents has probability 0 to generate the same infinite sequence of bits.
        So with probability 1 all agents eventually have unique codes,
        and \uniqueID\ stabilizes.
        We now show that implies \electLeader\ stabilizes.

        Since there are $n$ agents, \uniqueID\ cannot terminate until at least level $\log n$.
        So when \uniqueID\ terminates,
        $|\C| \geq \log n$,
        so $\M = 3 \cdot 2^{3|\C|} \geq \lowerBoundOnM$.
        Note that \averaging\ also has an equivalence between converging and stabilizing:
        one all agents \ave\ fields are within a certain interval, they cannot leave that interval.
        So by Lemma~\ref{lem:round},
        \averaging, if it stabilizes,
        will stabilize to values of \ave\ such that $\floor{\frac{\M}{\ave} + \frac{1}{2}} = n$.
        We claim that \averaging\ stabilizes with probability 1,
        which is shown below.
        Furthermore,
        \timer\ reaches \maxPhase\ with probability 1,
        since the only way to avoid incrementing the phase of an agent is forever to avoid any interaction between it and an agent at the next phase,
        which happens with probability 0.
        This implies that with probability 1,
        the correct population size is eventually written into the field {\thecount}.

        It remains to show the claim that \averaging\ stabilizes with probability 1.
        Define the potential function $\Phi$ for any configuration $\vec{c}$
        by $\Phi(\vec{c}) = \sum_{a} |a.\ave - \M/n|$,
        where the sum is over each agent $a$ in the population.
        The \averaging\ protocol stabilizes by the time $\Phi$ reaches its minimum value,\footnote{
            If $\M \gg 2n^2$,
            then \averaging\ can stabilize prior to this time,
            since the \ave\ values do not have to reach their final convergent values for
            the output function $\floor{\frac{\M}{\ave} + \frac{1}{2}}$ to converge,
            by Lemma~\ref{lem:round}.
        }
        which is either $n$ or $0$ depending on whether $n$ divides $\M$,
        when all agents have ave $= \floor{\M/n}$ or $\ceil{\M/n}$.
        We claim that $\Phi$ is nonincreasing with each transition of \averaging.
        When two agents meet,
        there are two cases:
        1) both of their \ave\ fields are $\geq \ceil{\M/n}$,
        or both are $\leq \floor{\M/n}$,
        and
        2) one of their \ave\ fields is $<$ (resp., $\leq$) $\floor{\M/n}$,
        and the other is $\geq$ (resp., $>$) $\ceil{\M/n}$ .
        Taking the average of their \ave\ fields,
        in case (1) does not change $\Phi$,
        and in case (2) decreases $\Phi$,
        so $\Phi$ is nonincreasing.

        It remains to show that $\Phi$ will reach its minimum value with probability 1.
        If the protocol has not converged,
        then there must be some agent with an \ave\ field not equal to
        either $\floor{\M/n}$ or $\ceil{\M/n}$.
        But since the population-wide sum of the \ave\ values is always $\M$,
        this implies that case (2) holds for some pair of agents.
        With probability 1, such a pair of agents must eventually meet,
        decreasing $\Phi$.
        So with probability 1,
        $\Phi$ eventually reaches its minimum value.
    \end{proof}

    \uniqueID\ stabilizes when all agents have a unique code since,
    by inspection of the \uniqueID\ protocol,
    this implies that the codes no longer can change.
    The next lemma shows that this happens at most linear time.

    \def\half{0.5}
    \def\eps{0.01}
    \FPeval{\constantA}{clip(\half+\eps)}
    \FPeval{\constantB}{clip(2*\constantA)}
    \FPeval{\constantC}{clip(\constantB+\eps)}
    \FPeval{\constantD}{clip(\constantC+\eps)}

    \begin{lemma} \label{lem:unique-ID-converge}
    \uniqueID\ stabilizes in expected time at most $\constantC n$.
    \end{lemma}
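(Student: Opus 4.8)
The plan is to decompose the execution according to the current \emph{level} --- the longest code length present in the population, which takes the values $1,2,4,8,\dots$ --- and to bound the expected parallel time spent at each level. By inspection of \uniqueID, once all agents share a common level with pairwise distinct codes no rule can fire again, so \uniqueID\ has stabilized; and since code bits are drawn uniformly at random, with probability $1$ such a configuration is eventually reached. Writing $T_\ell$ for the (possibly zero) parallel time during which the level equals $\ell$, the stabilization time is $T=\sum_\ell T_\ell$ over powers of two $\ell$, and I would bound $\E[T]=\sum_\ell \E\!\left[T_\ell\cdot\mathbb{1}[\text{level }\ell\text{ reached}]\right]$.

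First I would handle a single level $\ell$. Once level $\ell$ is first attained, an epidemic brings all agents to level $\ell$ in expected time at most $\epidemicExpectedTimeConstant\ln n$ (Lemma~\ref{lem:epidemic}), after which the level increases as soon as two agents with equal codes interact. If $2^\ell\le n/2$, then by the pigeonhole principle at least $n-2^\ell\ge n/2$ agents share a code with another agent, hence at least $(n-2^\ell)/2\ge n/4$ \emph{disjoint} colliding pairs exist; since the events ``a fixed one of these pairs is selected at a given step'' are mutually exclusive, some such pair meets within $\binom{n}{2}/(n/4)=O(n)$ interactions, i.e.\ $O(1)$ parallel time, in expectation. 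Therefore $\E[T_\ell\mid\text{reached}]=O(\log n)$ whenever $2^\ell\le n/2$; as the levels are powers of two there are only $O(\log\log n)$ of these, contributing $O(\log n\log\log n)\le\eps\,n$ in total for all large $n$. A crude bound also makes ``reaching level $\log n$'' unconditional: Corollary~\ref{cor:quick} gives it in $O(\log n\log\log n)$ time with probability $1-O(\log\log n/n)$, and on the complementary event the per-level bounds above give only an $O(n\log\log n)$ penalty, weighted by $O(\log\log n/n)$, hence $o(n)$.

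The real work is in the remaining ``dangerous'' levels: powers of two $\ell$ with $2^\ell>n/2$, namely $k,2k,4k,\dots$ where $k$ is the least power of two with $2^k\ge n$. Here I cannot exclude a slow step, but I can bound it crudely: if any colliding pair exists, a fixed such pair meets within $\binom{n}{2}$ interactions, i.e.\ $\le (n-1)/2<\half\,n$ parallel time, in expectation. Two quantitative observations then close the argument. First, because $\ell$ itself ranges over powers of two, the quantities $2^k,2^{2k},2^{4k},\dots$ grow doubly exponentially, so only $O(1)$ of them lie in any fixed polynomial range of $n$. Second, level $2^jk$ is reached only if a duplicate code existed at level $2^{j-1}k$, an event of probability at most $\binom{n}{2}/2^{2^{j-1}k}$ by a union bound (the birthday estimate underlying Lemma~\ref{lem:all_length} and Corollary~\ref{cor:all_length}); once $2^{2^{j-1}k}$ exceeds $n^{2}$ this is at most $n^{2-2^{j-1}}$, so the levels with $2^\ell$ super-polynomial in $n$ together contribute $o(1)$ to $\E[T]$. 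Hence only $O(1)$ dangerous levels contribute non-negligibly; bounding each by $(\half+\eps)\,n=\constantA\,n$ (the $\eps n$ absorbing its epidemic and lower-order terms) and arguing that at most two of them can simultaneously be reached with non-negligible probability and be slow bounds their combined contribution by $\constantB\,n=2\constantA\,n$. Adding the $\le\eps\,n$ from the low levels and the $o(1)$ tail yields $\E[T]\le\constantB\,n+\eps\,n=\constantC\,n$ for all sufficiently large $n$, the constant being enlarged to cover the finitely many remaining $n$.

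The main obstacle is precisely this bookkeeping for the dangerous levels: one must (a) use the birthday bounds of Lemma~\ref{lem:all_length} and Corollary~\ref{cor:all_length} to show that levels with $2^\ell$ super-polynomial in $n$ are essentially never reached, so their \emph{unconditional} (not merely high-probability) contribution to $\E[T]$ is negligible; (b) exploit that $2^\ell$ jumps doubly exponentially to conclude that only $O(1)$ levels are simultaneously reachable with non-negligible probability and potentially slow; and (c) keep the entire estimate an unconditional expectation, carefully handling the low-probability event that many duplicate codes persist past level $\log n$.
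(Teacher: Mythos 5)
Your proposal is correct and follows essentially the same route as the paper's own proof: a level-by-level decomposition in which each level costs an epidemic ($O(\log n)$ expected) plus at most about $n/2$ expected time for a colliding pair to meet, with birthday-type bounds (Lemma~\ref{lem:all_length}) showing that only the two critical levels near $\log n$ and $2\log n$ can contribute $\Theta(n)$ non-negligibly, yielding $\constantB n$ plus lower-order terms, i.e., at most $\constantC n$. The only cosmetic difference is that you bound the sub-$\log n$ levels directly by per-level expected times (which makes your extra appeal to Corollary~\ref{cor:quick} and its failure event unnecessary), whereas the paper charges that portion via Lemma~\ref{lem:quick}.
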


    \begin{proof}
        By Lemma~\ref{lem:epidemic}, once one agent reaches a new level,
        the expected time for all agents to reach that level is at most
        $\epidemicExpectedTimeConstant \ln n$.
        Once all agents are at the same level and there is at least one pair of duplicate codes
        (i.e., \uniqueID\ has not yet stabilized),
        the probability that the next interaction is a pair of agents with the same code is
        at least $1 / \binom{n}{2} > 2/n^2$,
        so the expected number of interactions for these agents to meet and start a new level is
        at most $n^2/2$,
        so expected time $n/2$.
        Thus, once there is one agent at a level,
        the expected time to get an agent at the next level
        (assuming \uniqueID\ does not stabilize at the current level)
        is at most
        $\epidemicExpectedTimeConstant \ln n + n/2 < \constantA n$.

        By Lemma~\ref{lem:quick},
        setting $\alpha = 1$,
    	with probability at least
        $1 - \frac{5 \log \log n}{n}$,
        in $5 \ln n \log \log n$ time all agents reach a level $k$ such that $\log n \leq k < 2 \log n$.
    	Once there,
    	if duplicate codes remain,
    	it takes expected time $< \constantA n$ to reach level $2k$ by the above argument.
    	If duplicate codes still remain,
    	it similarly takes expected time $< \constantA n$ to reach level at least $4k$.
    	By Lemma~\ref{lem:all_length},
    	for any $j \geq 0$
    	(letting $\epsilon = j-4$ in Lemma~\ref{lem:all_length}),
        duplicate codes remain at level $j \log n = (4+\epsilon) \log n$ with probability at most $n^{-(j-4)}$.

        Each new level after that takes $< \constantA n$ expected time.
        Thus the total expected time is at most
        (letting $j$ above be $2^i$ below):
        \begin{eqnarray*}
            &&
            \underbrace{5 \ln n \log \log n}_{\text{time to reach level $k \geq \log n$}}
            +
            \underbrace{\constantA n}_{\text{time to reach level $2k \geq 2 \log n$}}
            +
            \underbrace{\constantA n}_{\text{time to reach level $4k \geq 4 \log n$}}
            \\&&
            +
            \underbrace{\sum_{i=3}^\infty \constantA n \cdot \Pr[\text{level $2^i \log n$ has duplicate codes}]}_{\text{contribution of levels $\geq 8 \log n$}}
            \\ &\leq&
            5 \ln n \log \log n + \constantB n +
            \constantA n \sum_{i=3}^\infty n^{-(2^i - 4)} \ \ \ \ \ \ \ \ \ \ \ \ \ \ \text{Lemma~\ref{lem:all_length}}
            \\ &<&
            5 \ln n \log \log n + \constantB n +
            \constantA n \sum_{i=1}^\infty n^{-2^i + 2}
            \\ &=&
            5 \ln n \log \log n + \constantB n + 1 +
            \constantA n \sum_{i=2}^\infty n^{-2^i + 2}
            \\ &<&
            5 \ln n \log \log n + \constantB n + 1 +
            \constantA n \sum_{i=2}^\infty n^{-i}
            \\ &<&
            5 \ln n \log \log n + \constantB n + 1 +
            \constantA n \sum_{i=1}^\infty n^{-i}
            \\ &=&
            5 \ln n \log \log n + \constantB n + 1 +
            \constantA n \left( \frac{1}{1-n^{-1}} - 1 \right) \ \ \ \ \text{geometric series}
            \\ &=&
            5 \ln n \log \log n + \constantB n + 1 +
            \constantA n \frac{1}{n-1}
            < 
            \constantC n.   \qedhere
        \end{eqnarray*}
    \end{proof}

    \begin{proof}[Proof of Theorem~\ref{thm:main:exp}]
        By Theorem~\ref{thm:main:whp},
        \exactCounting\ converges to the correct answer in
        time $\exactCountingConvergenceTimeConstant \ln n \log \log n$
        with probability at least
        $1 - \exactCountingConvergenceTimeProbErr$.

        By Lemma~\ref{lem:unique-ID-converge},
        \uniqueID\ converges in expected time at most $\constantC n$.
        Once it has converged,
        it takes expected time
        $O(\log n)$
        for \averaging\ to converge and \timer\ to write the correct output if it has not already been written.
        The sum of these times is at most $\constantD n$ for sufficiently large $n$.
        We can bound the expected time as
        \begin{eqnarray*}
        &&
        \Pr[\text{convergence in time } \leq \exactCountingConvergenceTimeConstant \ln n \log \log n]
        \cdot \exactCountingConvergenceTimeConstant \ln n \log \log n +
        \\&&
        \Pr[\text{convergence in time } > \exactCountingConvergenceTimeConstant \ln n \log \log n]
        \cdot \constantD n
        \\&=&
        \left( 1 - \exactCountingConvergenceTimeProbErr \right)
        \cdot \exactCountingConvergenceTimeConstant \ln n \log \log n
        +
        \exactCountingConvergenceTimeProbErr \cdot \constantD n
        \\&<&
        \exactCountingConvergenceTimeConstant \ln n \log \log n
        +
        \constantD (\exactCountingConvergenceTimeProbErrNumerator)
        \\&<&
        \expectedTimeConstant \ln n \log \log n. \qedhere
        \end{eqnarray*}
    \end{proof}

}

\opt{full}{
    \subsection{Increasing time to minimize state complexity}
    \label{subsec:minimize-states}

    \exactCounting\ generalizes straightforwardly to trade off time and memory:
    by adjusting the rate at which the code length grows,
    the convergence time $t(n) = O(f(n) \log n)$,
    where $f(n)$ is the number of stages required for the code length to reach $\log n$.
    The minimum state complexity is achieved
    when the code length increments by 1 each stage,
    so that $f(n) = \log n$
    and $t(n) = \log^2 n$.
    In this case,
    a straightforward adaptation of
    Lemma~\ref{lem:all_length},
    letting $\epsilon=1$,
    indicates that with probability at least $1 - \frac{1}{n}$,
    all codes are unique by level $\CconstantMin \log n$.
    Carrying through the string length and integer bounds from the main argument
    gives state complexity $O(n^{\stateCountExponentMin})$
    for the full protocol
    and $O(n^{\LCsubprotocolStateCountConstantMin})$
    for just the leader election.
}

\subsection{Experiments}
\label{sec:all_experiments}


Simulations for the \exactCounting\ protocol
are shown in Figure \ref{fig:counting-plot}.

\begin{figure}[th]
	\centering
	\includegraphics[width=\textwidth, draft=false]{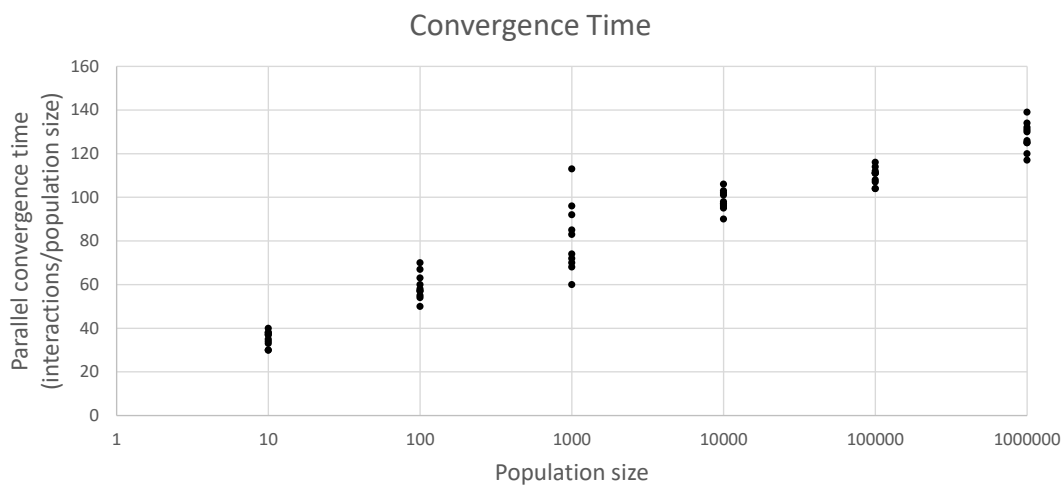} 
	\caption{Simulated convergence time of \exactCounting.
	The dots indicate the convergence time of individual experiments.
	The population size axis is logarithmic,
	so exactly $c \log_{10} n$ time complexity would correspond to a line of slope $c$.
	Since $\log \log n$ is ``effectively constant'' ($< 5$) for the values of $n$ shown,
	we similarly expect the plot to appear roughly linear.}
	\label{fig:counting-plot}
\end{figure}

\section{Conclusion}
\label{sec:conclusions}

We have shown a uniform population protocol computing the exact population size using
$O(\log n)$ bits memory
(i.e., $\mathrm{poly}(n)$ states)
and
$O(\log n \log \log n)$ time.
By removing the \averaging\ and \timer\ subprotocols,
the remainder is a uniform protocol electing a leader in
$O(\log n \log \log n)$ time
and
$18 \log n$ bits of memory (for $\C$ and $\LC$).

Some interesting questions are open.
Is there a \emph{uniform} polylogarithmic time population protocol, correct with high probability, for the problem of...

\begin{enumerate}

    \item \label{ques:leaderElectionTerminating}
    leader election, which is \emph{terminating}?

    \item \label{ques:approxSizeTerminating}
    constant-factor approximate size estimation, which is \emph{terminating}?

    \item
    exact size computation, which is \emph{terminating}?

    \item \label{ques:leaderElectionPolylogState}
    leader election, which is \emph{$\mathrm{polylog}(n)$ state-bounded}?

    \item \label{ques:sizeApproxPolylogState}
    constant-factor approximate size estimation, which is \emph{$\mathrm{polylog}(n)$ state-bounded}?

    \item \label{ques:sizeLinearState}
    exact size computation, which is \emph{$O(n)$ state-bounded}?

\end{enumerate}

For Question~\ref{ques:sizeLinearState},
the trivial lower bound is $n$,
but the \averaging\ protocol seems intuitively to require $\Omega(n^2)$ states.
It would be interesting to prove a $\Omega(n^2)$ lower bound,
either for ``any scheme based on averaging'' (suitably formalized),
or more generally for any sublinear-time size counting protocol.
Since \averaging\ requires $O(n^2)$ states,
if the initial configuration has a leader and a constant-factor approximation of $n$,
this means that solutions to questions
\ref{ques:leaderElectionTerminating},
\ref{ques:approxSizeTerminating},
\ref{ques:leaderElectionPolylogState},
and
\ref{ques:sizeApproxPolylogState}
would immediately imply a
$O(n^2)$ state,
$\mathrm{polylog}(n)$ time protocol for
exact population size computation.

Since many problems such as leader election require only an estimate on the population size,
not an exact value,
a protocol answering questions~\ref{ques:approxSizeTerminating} and~\ref{ques:sizeApproxPolylogState}
is an important goal.
Alistarh, Aspnes, Eisenstat, Gelashvili, Rivest~\cite{alistarh2017time}
have shown a uniform protocol (converging, but not terminating)
using only $\mathrm{polylog}(n)$ states
that in $O(\log n)$ expected time
can get an estimate $n'$ within a polynomial
(but not constant)
factor of the true size $n$:
with high probability $1/2 \log n \leq \log n' \leq 9 \log n$
i.e., $\sqrt{n} \leq n' \leq n^9$.

\newpage

\bibliographystyle{plain}
\bibliography{My-Library} 

\begin{thebibliography}{10}

\bibitem{alistarh2017time}
Dan Alistarh, James Aspnes, David Eisenstat, Rati Gelashvili, and Ronald~L
  Rivest.
\newblock Time-space trade-offs in population protocols.
\newblock In {\em SODA 2017: Proceedings of the Twenty-Eighth Annual ACM-SIAM
  Symposium on Discrete Algorithms}, pages 2560--2579. SIAM, 2017.

\bibitem{alistarh2018space}
Dan Alistarh, James Aspnes, and Rati Gelashvili.
\newblock Space-optimal majority in population protocols.
\newblock In {\em SODA 2018: Proceedings of the Twenty-Ninth Annual ACM-SIAM
  Symposium on Discrete Algorithms}, pages 2221--2239. SIAM, 2018.

\bibitem{AG15}
Dan Alistarh and Rati Gelashvili.
\newblock Polylogarithmic-time leader election in population protocols.
\newblock In {\em 42nd International Colloquium on Automata, Languages, and
  Programming (ICALP)}, volume 9135 of {\em Lecture Notes in Computer Science},
  pages 479 -- 491. Springer, Berlin, Heidelberg, 2015.

\bibitem{AADFP06}
Dana Angluin, James Aspnes, Zo{\"e} Diamadi, Michael~J. Fischer, and Ren{\'e}
  Peralta.
\newblock Computation in networks of passively mobile finite-state sensors.
\newblock {\em Distributed Computing}, 18(4):235--253, March 2006.

\bibitem{AAE06}
Dana Angluin, James Aspnes, and David Eisenstat.
\newblock Stably computable predicates are semilinear.
\newblock In {\em 25th annual ACM Symposium on Principles of Distributed
  Computing (PODC)}, pages 292--299, New York, NY, USA, 2006. ACM Press.

\bibitem{AAE08}
Dana Angluin, James Aspnes, and David Eisenstat.
\newblock Fast computation by population protocols with a leader.
\newblock {\em Distributed Computing}, 21(3):183--199, September 2008.

\bibitem{AspnesBBS2016}
James Aspnes, Joffroy Beauquier, Janna Burman, and Devan Sohier.
\newblock {Time and Space Optimal Counting in Population Protocols}.
\newblock In {\em 20th International Conference on Principles of Distributed
  Systems (OPODIS 2016)}, volume~70, pages 13:1--13:17, 2017.

\bibitem{beauquier2015space}
Joffroy Beauquier, Janna Burman, Simon Claviere, and Devan Sohier.
\newblock Space-optimal counting in population protocols.
\newblock In {\em DISC 2015: International Symposium on Distributed Computing},
  pages 631--646. Springer, 2015.

\bibitem{beauquier2007self}
Joffroy Beauquier, Julien Clement, Stephane Messika, Laurent Rosaz, and
  Brigitte Rozoy.
\newblock Self-stabilizing counting in mobile sensor networks with a base
  station.
\newblock In {\em Distributed Computing}, pages 63--76. Springer Berlin
  Heidelberg, 2007.

\bibitem{hcapflpp}
Amanda Belleville, David Doty, and David Soloveichik.
\newblock {Hardness of computing and approximating predicates and functions
  with leaderless population protocols}.
\newblock In {\em ICALP 2017: 44th International Colloquium on Automata,
  Languages, and Programming}, volume~80 of {\em LIPIcs}, pages 141:1--141:14,
  2017.

\bibitem{berenbrink2018simple}
Petra Berenbrink, Dominik Kaaser, Peter Kling, and Lena Otterbach.
\newblock {Simple and Efficient Leader Election}.
\newblock In {\em 1st Symposium on Simplicity in Algorithms (SOSA 2018)},
  volume~61, pages 9:1--9:11, 2018.

\bibitem{bilke2017brief}
Andreas Bilke, Colin Cooper, Robert Els{\"a}sser, and Tomasz Radzik.
\newblock Brief announcement: Population protocols for leader election and
  exact majority with {O}$(\log^2 n)$ states and {O}$(\log^2 n)$ convergence
  time.
\newblock In {\em PODC 2017: Proceedings of the ACM Symposium on Principles of
  Distributed Computing}, pages 451--453. ACM, 2017.

\bibitem{bower2004computational}
James~M Bower and Hamid Bolouri.
\newblock {\em Computational modeling of genetic and biochemical networks}.
\newblock MIT press, 2004.

\bibitem{CFQS12}
Arnaud Casteigts, Paola Flocchini, Walter Quattrociocchi, and Nicola Santoro.
\newblock Time-varying graphs and dynamic networks.
\newblock {\em International Journal of Parallel, Emergent and Distributed
  Systems}, 27(5):387--408, 2012.

\bibitem{chen2013programmable}
Yuan-Jyue Chen, Neil Dalchau, Niranjan Srinivas, Andrew Phillips, Luca
  Cardelli, David Soloveichik, and Georg Seelig.
\newblock Programmable chemical controllers made from {DNA}.
\newblock {\em Nature Nanotechnology}, 8(10):755--762, 2013.

\bibitem{LeaderElectionDIST}
David Doty and David Soloveichik.
\newblock Stable leader election in population protocols requires linear time.
\newblock {\em Distributed Computing}.
\newblock to appear. Special issue of DISC 2015 invited papers.

\bibitem{GS18}
Leszek Gasieniec and Grzegorz Stachowiak.
\newblock Fast space optimal leader election in population protocols.
\newblock In {\em SODA 2018: ACM-SIAM Symposium on Discrete Algorithms}, 2018.
\newblock to appear.

\bibitem{GOSS18}
Shafi Goldwasser, Rafail Ostrovsky, Alessandra Scafuro, and Adam Sealfon.
\newblock Population stability: regulating size in the presence of an
  adversary.
\newblock In {\em PODC 2018: Proceedings of the ACM Symposium on Principles of
  Distributed Computing}, 2018.
\newblock to appear.

\bibitem{IK14}
Tomoko Izumi, Keigo Kinpara, Taisuke Izumi, and Koichi Wada.
\newblock Space-efficient self-stabilizing counting population protocols on
  mobile sensor networks.
\newblock {\em Theor. Comput. Sci.}, 552:99--108, 2014.

\bibitem{JM2004}
M\'{a}rk Jelasity and Alberto Montresor.
\newblock Epidemic-style proactive aggregation in large overlay networks.
\newblock In {\em 24th International Conference on Distributed Computing
  Systems, 2004. Proceedings.}, pages 102--109, 2004.

\bibitem{KLO10}
Fabian Kuhn, Nancy Lynch, and Rotem Oshman.
\newblock Distributed computation in dynamic networks.
\newblock In {\em Proceedings of the 42nd ACM symposium on Theory of computing
  (STOC)}, pages 513--522. ACM, 2010.

\bibitem{LBBC14}
Giuseppe Antonio~Di Luna, Roberto Baldoni, Silvia Bonomi, and Ioannis
  Chatzigiannakis.
\newblock Counting in anonymous dynamic networks under worst-case adversary.
\newblock {\em IEEE 34th International Conference on Distributed Computing
  Systems (ICDCS)}, 2014.

\bibitem{M15}
Othon Michail.
\newblock Terminating distributed construction of shapes and patterns in a fair
  solution of automata.
\newblock In {\em Proceedings of the 2015 ACM Symposium on Principles of
  Distributed Computing}, pages 37--46, 2015.
\newblock Also in \emph{Distributed Computing}, 2017.

\bibitem{MCS13}
Othon Michail, Ioannis Chatzigiannakis, and Paul~G Spirakis.
\newblock Naming and counting in anonymous unknown dynamic networks.
\newblock In {\em 15th International Symposium on Stabilization, Safety, and
  Security of Distributed Systems (SSS)}, pages 281--295. Springer, 2013.

\bibitem{mitzenmacher2005probability}
Michael Mitzenmacher and Eli Upfal.
\newblock {\em Probability and computing: Randomized algorithms and
  probabilistic analysis}.
\newblock Cambridge university press, 2005.

\bibitem{MocquardAS2016}
Y.~Mocquard, E.~Anceaume, and B.~Sericola.
\newblock Optimal proportion computation with population protocols.
\newblock In {\em 2016 IEEE 15th International Symposium on Network Computing
  and Applications (NCA)}, pages 216--223, Oct 2016.

\bibitem{MocquardAABS2015}
Yves Mocquard, Emmanuelle Anceaume, James Aspnes, Yann Busnel, and Bruno
  Sericola.
\newblock Counting with population protocols.
\newblock In {\em 14th IEEE International Symposium on Network Computing and
  Applications}, pages 35--42, 2015.

\bibitem{polymerStackDNA16}
Lulu Qian, David Soloveichik, and Erik Winfree.
\newblock Efficient {T}uring-universal computation with {DNA} polymers.
\newblock In {\em DNA 2010: Proceedings of The Sixteenth International Meeting
  on DNA Computing and Molecular Programming}, volume 6518 of {\em Lecture
  Notes in Computer Science}. Springer, 2010.

\bibitem{SolCooWinBru08}
David Soloveichik, Matthew Cook, Erik Winfree, and Jehoshua Bruck.
\newblock Computation with finite stochastic chemical reaction networks.
\newblock {\em Natural Computing}, 7(4):615--633, 2008.

\bibitem{SolSeeWin10}
David Soloveichik, Georg Seelig, and Erik Winfree.
\newblock {D}{N}{A} as a universal substrate for chemical kinetics.
\newblock {\em Proceedings of the National Academy of Sciences},
  107:5393--5398, 2010.

\bibitem{srinivas2017enzyme}
Niranjan Srinivas, James Parkin, Georg Seelig, Erik Winfree, and David
  Soloveichik.
\newblock Enzyme-free nucleic acid dynamical systems.
\newblock {\em Science}, 358(6369):eaal2052, 2017.

\bibitem{thachuk2012space}
Chris Thachuk and Anne Condon.
\newblock Space and energy efficient computation with {DNA} strand displacement
  systems.
\newblock In {\em DNA 2012: 18th International Conference on DNA Computing and
  Molecular Programming}, pages 135--149. Springer, 2012.

\bibitem{Volterra26}
Vito Volterra.
\newblock Variazioni e fluttuazioni del numero d’individui in specie animali
  conviventi.
\newblock {\em Mem. Acad. Lincei Roma}, 2:31--113, 1926.

\end{thebibliography}




\appendix

\end{document}